\newtheorem{thm}{Theorem}
\newtheorem{lem}{Lemma}
\newtheorem{prop}{Proposition}
\newtheorem{definition}{Definition}
\def\mfF{{\mathfrak{F}}}
\def\mfP{{\mathfrak{P}}}
\def\bx{{\bf x}}
\def\bs{{\bf s}}
\def\b1{{\bf 1}}
\def\bz{{\bf z}}
\def\bw{{\bf w}}
\def\bv{{\bf v}}
\def\bX{{\bf X}}
\def\bA{{\bf A}}
\def\bD{{\bf D}}
\def\bW{{\bf W}}
\def\bI{{\bf I}}
\def\by{{\bf y}}
\def\bd{{\bf d}}
\def\bv{{\bf v}}
\def\bY{{\bf Y}}
\def\bQ{{\bf Q}}
\def\bD{{\bf D}}
\def\bbeta{{\mbox{\boldmath $\beta$}}}
\def\bdelta{{\mbox{\boldmath $\delta$}}}
\def\blambda{{\mbox{\boldmath $\lambda$}}}
\begin{document}

\title{A coordinate-wise optimization algorithm for the Fused Lasso}

\author{Holger H\"ofling \and Harald Binder \and Martin Schumacher}

\maketitle
\begin{abstract}
$L_1$-penalized regression methods such as the Lasso \citep{Tibshirani1996} 
that achieve both variable selection and shrinkage have been very popular. An extension
of this method is the Fused Lasso \citep{TW2007}, which allows for the incorporation
of external information into the model. In this article, we develop new and fast
algorithms for solving the Fused Lasso which are based on coordinate-wise
optimization. This class of algorithms has recently been 
applied very successfully to solve $L_1$-penalized problems very quickly
\citep{FHT2007}. As a straightforward coordinate-wise procedure does not
converge to the global optimum in general, we adapt it in two ways, using 
maximum-flow algorithms and a Huber penalty based approximation to the loss
function. In a simulation study, we evaluate
the speed of these algorithms and compare them to other standard methods. As
the Huber-penalty based method is only approximate, we also evaluate
its accuracy. Apart from this, we also extend the Fused Lasso to logistic as well as proportional hazards
models and allow for a more flexible penalty structure.
\end{abstract}

\section{Introduction}
$L_1$ penalization methods have been used very successfully for variable selection as well
as prediction. One of the most widely known methods is the Lasso \citep{Tibshirani1996} which
applies an $L_1$ penalty to the coefficients in the model and minimizes the loss function
\[
\frac{1}{2}(\by -\bX\bbeta)^T(\by - \bX\bbeta) + \lambda_1 \sum_{k=1}^p w_k|\beta_k|
\]
where $\bX \in \mathds{R}^{n \times p}$ as well as $\by, \bbeta \in \mathds{R}^p$. Here,
$w_k > 0$ is some weight associated with coefficient $\beta_k$. By using this
penalty, it is possible to fit models even if $p > n$, perform variable selection and 
improve prediction performance through shrinkage. In recent years, this model has been
extended in many ways, for example by allowing for grouping of variables 
\citep[e.g.][]{Yuan2006}, by including an additional ridge regression penalty 
\citep[see][]{Zou2005} or by choosing appropriate weights for the $L_1$ penalty to
improve its asymptotic properties \citep{Zou2006}, to name just a few.

Often, there is also some other prior information
that could be incorporated into the model. An extension of the Lasso that takes advantage of such
information is 
the Fused Lasso \citep{TW2007}. In its simplest form, it assumes that the coefficients
$\beta_k$ are ordered and imposes an additional $L_1$ penalty on the differences of 
neighbors. The loss function that is being minimized in this case is then
\[
\frac{1}{2}(\by -\bX\bbeta)^T(\by - \bX\bbeta) + \lambda_1 \sum_{k=1}^p w_k |\beta_k| + \lambda_2 
    \sum_{k=1}^{p-1} w_{k,k+1}|\beta_{k+1} - \beta_k|,
\]
where $\lambda_2$ is the parameter that controls the smoothness of the resulting solution.
Most often, this model has been used in the form of the Fused Lasso Signal Approximator (FLSA)
where we have that $\bX = \bI$, the identity matrix. This model can, for example, be
used for Comparative Genomic Hybridization (CGH) data \citep[see][]{TW2007}. For CGH data, at various points of a
chromosome, a noisy estimate of the number of available copies of the genomic data is 
being taken. As duplications or deletions usually occur over large areas, the $\lambda_2$
penalty smoothes the copy number estimate over neighboring data points, thereby 
increasing the accuracy of the estimation. Another 
possible application of the FLSA is the denoising of images \citep[see][]{FHT2007}.
Recently, some theoretical properties of the model have been shown by \citet{Rinaldo2009}
and a path algorithm has been developed in \citet{Hoefling2010}. 

In the above model, it is not necessary to restrict the $\lambda_2$-penalty
to neighboring coefficients. To allow for greater flexibility, we penalize
all differences of coefficients that correspond to an edge in a graph. To be more
precise, let $V= \{1, \ldots, p\}$ be a set of nodes, each of which corresponds to 
one of the coefficients. Then we define a graph $\mathcal{G} = (V, E)$ with 
edge set $E$ and penalize all differences $|\beta_k - \beta_l|$ for which
$(k,l) \in E$, i.e. for which the graph $\mathcal{G}$ has a corresponding edge.
Additionally, we also allow for different weights for each of the edges (and each
of the $|\beta_k|$ penalties) so that we get the loss function
\begin{equation}
g(\bX, \by, \bbeta, \bw, \mathcal{G}, \blambda) = 
    \frac{1}{2} (\by -\bX\bbeta)^T(\by - \bX\bbeta) + 
    \lambda_1 \sum_{k=1}^p w_k|\beta_k| + \lambda_2 
    \sum_{(k,l) \in E, k < l} w_{k,l}|\beta_{k} - \beta_l|.
    \label{Eq:FusedLasso}
\end{equation}
In addition to the squared error loss function above, we can of course also 
include other convex functions. Specifically, we will also look at
logistic regression as well as the Cox proportional hazards models. 

To the best of our knowledge, there is currently no optimized algorithm available
that minimizes the loss function in Equation (\ref{Eq:FusedLasso}). 
However, solving this model efficiently is important so that it can be applied in
larger datasets or in situations where a large number of fits of a model is important 
(e.g. when bootstrapping a method that uses cross-validation). A method that
has been very successfully used for the Lasso model is coordinate-wise 
optimization \citep[][see]{FHT2007}. In this article, we will present two algorithms
that extend this technique to the Fused Lasso model, one exact, the other approximate. 
In Section \ref{Sec:Algorithm}, we will outline
the algorithms and prove that it converges to a global optimum for the exact method. Its speed 
will be evaluated and compared to other methods in Section \ref{Sec:Simulations}.
The results will then be discussed in Section \ref{Sec:Discussion}. An implementation
of this method is available from \texttt{CRAN} as \texttt{FusedLasso} \citet{Hoefling:FusedLassoPackage}.

\section{\label{Sec:Algorithm}The Algorithm}
In this section we want to present our algorithms for solving the Fused Lasso
 problem specified in Equation (\ref{Eq:FusedLasso}). We will do this in several steps. 
First, we will
present the naive implementation of the coordinate-wise method that 
just optimizes each coordinate of the function $g$ iteratively until it
converges. However, this solution is not guaranteed to be the global optimum or even  
be unique and in general performs rather poorly in our setting (see Section \ref{Sec:Naive}). 
In order to avoid these problems, we force certain variables to be equal to each other (i.e. \emph{fused}) and thus
allow the coordinate-wise algorithm to overcome its limitations (see Sections
\ref{Sec:FusedSets} and \ref{Sec:DefineFusedSets}). To be more precise, for a set of 
\emph{fused} variables $F \subset \{1, \ldots, p\}$, we require
that $\beta_k = \beta_l$ for all $k, l \in F$. There can also be several, non-overlapping
sets $F$. These restrictions are then included in the loss function $g$ to form a 
constrained loss function $\tilde{g}$, which is then 
being optimized using the naive coordinate-wise procedure. By choosing
the sets correctly, our algorithm will converge to the global optimum, for which we provide
the proof in the appendix.

However, for models for which the number of variables is very large, the method used here can
be relatively slow. Therefore, we also introduce an approximate method that addresses this problem
and will be presented in Section \ref{Sec:Huber}.

\subsection{\label{Sec:Naive}Naive Algorithm}
A naive coordinate-wise optimization procedure for loss function (\ref{Eq:FusedLasso}) works analogous
to the method for the Lasso described in \citet{FHT2007}. For every $k \in \{1, \ldots, p\}$, 
$\beta_k$ is being optimized while keeping all other coefficients constant. This is
very simple as $\partial g(\bX, \by, \bbeta, \bw, \mathcal{G}, \blambda) / \partial \beta_k$ is
piecewise linear, increasing step function with positive jumps. For these,
the unique root can be found very easily. This simple procedure can be sped up considerably
for cases where $p > n$ as then most coefficients will always be $0$ and in general will not be changed
by the coordinate optimization steps. In order to exploit this, an active set $\mathcal{A}$
is defined and the coordinate-wise procedure optimizes only over the variables in 
$\mathcal{A}$, leaving out many variables that are $0$ and do not need to be considered in 
every iteration. In order to guarantee convergence, it is then necessary to 
regularly adapt the set $\mathcal{A}$ so that
variables that could become non-zero in the next step are included and variables that
have become 0 are excluded. As this technique is not a central point of this article, 
we will not elaborate on it any further. The
conditions for changing the set $\mathcal{A}$ can be found in Algorithm 
\ref{Alg:NaiveCoordinate} and are easily derived in closed form. Furthermore, the underlying concept 
of active variables is explained in more detail in \citet{FHT2007}.

\begin{algorithm}
%    \SetLine
    \KwData{$\by$, $\bX$, $\bbeta_0$}
    \KwResult{$\bbeta$ optimizing $g(\bX, \by, \bbeta, \bw, \mathcal{G}, \blambda)$}

    $\bbeta := \bbeta_0$\;
    $\mathcal{A} := \{k | \beta_k \neq 0\}$ \;
    \Repeat{$\mathcal{A}$ did not change}{
        \Repeat{converged}{
            \ForEach{$k \in \mathcal{A}$}{
                minimize $g(\bX, \by, \bbeta, \bw, \mathcal{G}, \blambda)$ over $\beta_k$\;
            }
        }
        Set $\mathcal{A} := \mathcal{A} \cup \left\{k |\beta_k=0  \text{ and } \text{argmin}_{\beta_k} g(\bX, \by, \bbeta, \bw, \mathcal{G}, \blambda) \neq 0
            \right\}$\;
    }
    \caption{Naive coordinate-wise optimization algorithm for the Fused Lasso.}
    \label{Alg:NaiveCoordinate}
\end{algorithm}

Our first concern here is of course that the coordinate-wise algorithm converges at all. The following
theorem shows that while a unique point does not necessarily exist, any cluster point of the 
algorithm is a coordinate-wise minimum of $g$.
\begin{thm}
Assume that the coordinate-wise algorithm starts at $\bbeta^{0}$. 
Let $\bbeta^{r}$ be the sequence generated by the coordinate steps for $r \in \mathds{N}$. 
Then $\bbeta^{r}$ is bounded and every cluster point 
is a coordinate-wise minimum of $g$.
\label{Thm:NaiveCoordinate}
\end{thm}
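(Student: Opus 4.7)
The plan is to combine a coercivity argument for boundedness with a continuity-of-update argument for the cluster-point claim, following the standard template for cyclic coordinate descent on convex non-smooth problems.

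For boundedness, I observe that each coordinate step minimizes $g$ in one coordinate with the others held fixed, so $g(\bbeta^{r+1}) \le g(\bbeta^{r})$ and the whole sequence stays in the sublevel set $\{\bbeta : g(\bbeta) \le g(\bbeta^{0})\}$. Because $\lambda_1 > 0$ and every weight $w_k > 0$, the $\ell_1$ term $\lambda_1 \sum_k w_k |\beta_k|$ alone drives $g(\bbeta) \to \infty$ as $\|\bbeta\| \to \infty$, so $g$ is coercive, its sublevel sets are bounded, and hence $\{\bbeta^{r}\}$ is bounded.

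For the cluster-point claim, let $\bbeta^{r_j} \to \bbeta^{*}$ along a subsequence. Since $g(\bbeta^{r})$ is non-increasing and bounded below it converges to some $g^{*}$, and by continuity $g^{*} = g(\bbeta^{*})$; in particular the successive decreases $g(\bbeta^{r}) - g(\bbeta^{r+1})$ tend to zero. For each coordinate $k$, the univariate subproblem $\min_{\beta_k} g(\ldots, \beta_k, \ldots)$ is convex and piecewise-quadratic-plus-piecewise-linear; whenever $\|\bX_{\cdot k}\|^{2} > 0$ or $\lambda_1 w_k > 0$ it is strictly convex, so its argmin is a continuous single-valued map $T_k$ of the remaining coordinates. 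Writing the within-sweep iterate that updates coordinate $k$ as $\beta_k^{r_j+k} = T_k(\bbeta^{r_j+k-1})$, passing to the limit along $\bbeta^{r_j}$ and using the vanishing of the per-step decrease gives $T_k(\bbeta^{*}) = \beta_k^{*}$ for every $k$, which is exactly the statement that $\bbeta^{*}$ is a coordinate-wise minimum.

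The main obstacle is that the fused penalty $|\beta_k - \beta_l|$ is a non-smooth term that couples coordinates, so Tseng's classical convergence theorem for block coordinate descent, which requires separability of the non-smooth part, does not apply directly --- this is precisely why coordinate-wise minima need not be global minima here, motivating the fusion-based remedy of Section \ref{Sec:FusedSets}. The technical point that requires care is the continuity of $T_k$ when the univariate problem is not strictly convex in $\beta_k$; in that degenerate case I would replace the continuity-of-argmin argument with the subgradient optimality condition $0 \in \partial_{\beta_k} g(\bbeta^{r+1})$ and appeal to upper-semicontinuity of the subdifferential of the continuous convex function $g$ to pass the inclusion to the limit, obtaining $0 \in \partial_{\beta_k} g(\bbeta^{*})$ for every $k$.
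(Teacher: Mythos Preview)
Your approach is essentially the same as the paper's, which also follows the template of \citet{Tseng2001}: show the iterates stay in a compact sublevel set, take a convergent subsequence, argue that shifted subsequences converge to the same limit, and pass the coordinate-wise optimality to the limit. The main substantive difference is that the paper isolates the step you gloss over --- that vanishing per-step decrease forces vanishing step \emph{size} --- as a separate lemma: it shows $g(\bbeta)-g(\hat{\bbeta})\ge a(\beta_{k_0}-\hat{\beta}_{k_0})^2$ with $a=\min_k (\bX^T\bX)_{kk}$, by exploiting that the one-dimensional derivative is piecewise linear with slope $2a_{k_0}$ and only positive jumps. This is exactly the quantitative link you need to conclude $\bbeta^{r_j+k-1}\to\bbeta^*$ from $\bbeta^{r_j}\to\bbeta^*$, and you should make it explicit rather than hide it inside ``vanishing of the per-step decrease.'' Once that is in place, the paper passes the definitional inequality $g(\bbeta^{r+j}+\lambda e_k)\ge g(\bbeta^{r+j})$ to the limit, which is a touch cleaner than invoking continuity of the argmin map $T_k$, though both work.

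One small error: your claim that ``$\lambda_1 w_k>0$'' suffices for strict convexity of the univariate subproblem is wrong --- $|\beta_k|$ is piecewise linear, not strictly convex --- so strict convexity comes only from $\|\bX_{\cdot k}\|^2>0$. The paper implicitly assumes this (otherwise its constant $a$ is zero and the lemma is vacuous), and your fallback via upper-semicontinuity of the subdifferential does not rescue the degenerate case either, since you still need the shifted subsequence $\bbeta^{r_j+k}\to\bbeta^*$ to pass $0\in\partial_{\beta_k}g$ to the limit, and that is precisely what fails without the strong-convexity decrease bound.
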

The proof can be found in the appendix. For the rest of the article, 
we will view a whole run of the coordinate-wise algorithm as
a single step in our procedure. Accordingly, $\bbeta^{(m)}$ will refer to the
estimate of $\bbeta$ after the $m$-th application of the coordinate-wise
minimization. 

The problem with this naive implementation of the coordinate-wise algorithm is that
it is not guaranteed to converge to a global optimum. Very often, it will be stuck
in local coordinate-wise minima although an improvement of the loss function would be possible if
several $\beta_k$ could be moved at the same time instead of separately. In order
to incorporate this, we will \emph{fuse} coefficients. When we refer to 2 
neighboring coefficients as \emph{fused}, we mean that we
force them  to be equal in the coordinate optimization steps and thus move together.
More specifically, we define a set of coefficients to be a fused set as follows:

\begin{definition}\label{Def:FusedSets}
Let $\mathcal{G} | F$ be the graph $\mathcal{G}=(V,E)$ restricted to the set $F$, i.e. 
$\mathcal{G} | F = (F, \{(k,l)|(k,l) \in E \text{ for } k,l \in F\})$. Then a set 
$F \subseteq \{1, \ldots, p\}$ is called a fused set w.r.t. $\bbeta$ if 
$\mathcal{G}|F$, is connected and $\beta_k = \beta_l \; \forall \; k,l \in F$.
\end{definition}

Now, we will incorporate these fusions into our loss function.

\subsection{\label{Sec:FusedSets}Incorporating the fused sets}
Let $\mfF$ be a collection of fused sets, i.e. $\mfF = \{ F_1, \ldots, F_{|\mfF|}\}$ where
each $F_i$ is a fused set according to the Definition \ref{Def:FusedSets} above. Furthermore, we assume
that the $F_i$ are non-overlapping and cover all coefficients, that is
\[
F_i \cap F_j \; \forall i \neq j \quad \text{and} 
\quad \cup_{i=1}^{|\mfF|} F_i = \{1, \ldots, p\}
\]
Then, given these sets we want to solve the constrained optimization problem
\begin{align*}
    &\text{minimize } g(\bX, \by, \bbeta, \bw, \mathcal{G}, \blambda) \\
    &\text{subject to } \beta_k = \beta_l \; \forall k,l \in F_i \; \forall F_i \in \mfF.
\end{align*}
However, as can be seen very easily, this is actually equivalent to 
\[
\text{minimize } g(\tilde{\bX}, \tilde{\by}, \tilde{\bbeta}, \tilde{\bw},
        \tilde{\mathcal{G}}, \blambda)
\]
where we define $\tilde{\bX}, \tilde{\by}, \tilde{\bbeta}, \tilde{\bw},
      \tilde{\mathcal{G}}$ as follows:
\begin{description}
    \item[$\tilde{\bX}$:] Here, column $i$ of $\tilde{\bX}$ corresponds to the sum
        of the columns of $\bX$ of the indices in $F_i$, i.e.
        \[
        \tilde{x}_{ij} := \sum_{k \in F_i} x_{kj} \quad \forall \; F_i \in \mfF.
        \]
    \item[$\tilde{\by}$:] Similar to above, we define 
    \[
        \tilde{y}_i := \sum_{k \in F_i} y_k \quad \forall F_i \in \mfF.
    \]
    \item[$\tilde{\bbeta}$:] $\tilde{\beta}_i = \beta_k$ for any $k \in F_i$ for all
    $F_i \in \mfF$.
    \item[$\tilde{\bw}$:] For the weights for $\lambda_1$ we set for each $F_i \in \mfF$
    \[
    \tilde{w}_i := \sum_{k \in F_i} w_k
    \]
    and for the weights associated with $\lambda_2$ and $\mathcal{G}$ define
    for each combination $F_i, F_j \in \mfF$ for $i \neq j$ that
    \[
    \tilde{w}_{i,j} := \sum_{(k,l)|(k,l) \in E; k \in F_i, l \in F_j} w_{k,l}.
    \]
    \item[$\tilde{\mathcal{G}}$:] Here we define the graph as $\tilde{\mathcal{G}} = 
        (\tilde{V}, \tilde{E})$ where 
    \begin{align*}
    \tilde{V} = \{1, \ldots, |\mfF|\} \quad \text{and} \\ 
    \tilde{E} = \{(i,j)|\exists k\in F_i, l\in F_j \text{ where } F_i, F_j \in \mfF,
        i \neq j \text{ and } (k,l) \in E\}.
    \end{align*}
\end{description}
It is easy to derive these adjusted input variables by replacing $\beta_k$ by $\tilde{\beta}_i$
for all $k \in F_i$ for every $F_i \in \mfF$ in Equation (\ref{Eq:FusedLasso}) 
and then simplify the result.
Therefore, we will omit a detailed derivation here.

Using these fused input data, we can now apply the coordinate-wise algorithm to $\tilde{g}$ until
it converges. However, choosing the correct fusions is crucial for the algorithm
to converge to the global optimum. In the definition of fused sets, we stated
that is is necessary that they are connected w.r.t. $\mathcal{G}$ and all
have the same $\beta$-value. However, this does not imply that all coefficients
with the same value that are connected have to be fused. This gives us some freedom
in choosing the connected sets. Next, we will specify how to choose them in order to
ensure that the algorithm converges to the global minimum.

\subsection{\label{Sec:DefineFusedSets} How to define the fused sets using a maximum-flow problem}
In the coordinate-wise algorithm, the issue that may prevent the algorithm
from converging to the correct solution is that, while it may not be possible to
improve the loss function by just moving one coefficient, it may be possible
to get an improvement by moving several at the same time. We assume that
we have some current estimate $\bbeta^{(m)}$ in step $m$, and we now want to identify
sets $F_i$ that can be moved in the next iteration of the coordinate-wise
algorithm. 

As a first step, we define the sets as all coefficients that 
are connected and have the same value. This is easily done by using the following 
equivalence relation $\sim$:
\begin{align*}
k \sim l \quad \Longleftrightarrow \quad &\beta_k^{(m)} = \beta_l^{(m)} \text{ and }
k \text{ is connected to } l \text{ w.r.t. } \mathcal{G} \\ 
    &\text{ only through nodes with value equal to } \beta_k^{(m)}.
\end{align*}
Here, we suppress for notational convenience the dependence of the relation
on $\mathcal{G}$ as well as $\bbeta^{(m)}$.
This equivalence relation is clearly reflexive, symmetric and transitive and therefore
partitions the set $\{1, \ldots, p\}$ into sets 
$P_1, \ldots, P_{|\sim|}$. We call the collection of these sets $\mfP$, i.e. 
\[
  \mfP = \{P_1, P_2, \ldots, P_{|\sim|} \}.
\]
For each of these sets $P_i$, all pairs of coefficients $k,l \in P_i$ are then
by definition of the equivalence relation connected in $\mathcal{G}$ and
have the same value of the coefficients (i.e. $\beta_k^{(m)}=\beta_l^{(m)}$). 
Furthermore, they are maximal in the sense that they
could not be enlarged and still have these properties.

This partition is the starting point for our fused sets. We now want to
specify in which situation these sets can stay as they are and when they have to be
split into several parts. The latter situation occurs when such a split makes an 
improvement of the loss function possible. For this, we need to look 
at the effect of the $\lambda_2$-penalty inside the groups $P_i$ separately.

In order to do this, we slightly rewrite the loss function. In the following, please note that any conditions
that are separated by commas are ``and'' conditions.
\begin{align*}
g(\bX,\by, &\bbeta, \bw, \mathcal{G}, \blambda) = 
    \frac{1}{2} (\by -\bX\bbeta)^T(\by - \bX\bbeta) + 
    \lambda_1 \sum_{k=1}^p w_k|\beta_k| + \lambda_2 
    \sum_{(k,l) \in E, k < l} w_{k,l}|\beta_{k} - \beta_l| = \\
    &\frac{1}{2} (\by -\bX\bbeta)^T(\by - \bX\bbeta) + 
    \lambda_2 \sum_{i=1}^{|\mfP|} \sum_{(k,l) \in E, k \in P_i, l \not \in P_i, k < l} 
        w_{k,l}|\beta_{k} - \beta_l| + \\
    &\lambda_1 \sum_{i=1}^{|\mfP|} \sum_{k \in P_i} w_k |\beta_k| +  
    \lambda_2 \sum_{i=1}^{|\mfP|} \sum_{(k,l) \in E, k,l \in P_i, k < l} 
        w_{k,l}|\beta_{k} - \beta_l| = \\
     &h(\bX, \by, \bbeta, \bw, \mathcal{G}, \blambda, \mfP) +
        \lambda_1 \sum_{i=1}^{|\mfP|} \sum_{k \in P_i} w_k |\beta_k| +  
        \lambda_2 \sum_{i=1}^{|\mfP|} \sum_{(k,l) \in E, k,l \in P_i, k < l} 
        w_{k,l}|\beta_{k} - \beta_l|.
\end{align*}
where we collect everything in the function $h(\bX, \by, \bbeta, \bw, \mathcal{G}, \blambda, \mfP)$
except for the effect of the $\lambda_1$-penalty and the 
$\lambda_2$-penalty inside the groups $P_i$. Then by the definition of the $P_i$, we see that 
$h$ is differentiable.  

We now want to find out, if we should split the set $P_i$, where we assume for simplicity
first that $\beta_k \neq 0$ for $k \in P_i$. As a mental image, we imagine
the elements in $P_i$ as balls. They are connected as defined by the edges in $\mathcal{G}$
with strings, which have tensile strength $\lambda_2 w_{kl}$. On each ball, we now
apply a force of $\partial h / \partial \beta_k + \lambda_1 w_k \text{sign}(\beta_k)$ for $k \in P_i$, where the sign 
determines if the force is applied upwards or downwards. The question
we want to answer is if some of the strings will break and if they break, which
of the balls will still be connected to each other. If the balls break apart
into separate groups, then we should also split $P_i$ into the corresponding groups.

We can specify this mental image in the form of a maximum flow problem. For this
restrict graph $\mathcal{G}$ to the set $P_i$ and add a source node $s$ (pulling upwards) and a sink node $r$ (pulling downwards). 
Our new graph is then $\mathcal{G}_{i}^M=(V_i^M, E_i^M)$, where 
$V_{i}^M = P_i \cup \{r, s\}$. For the edges, we define
\begin{align*}
E_i^M =& \{(k,l)|k,l \in P_i \text{ and } (k,l) \in E\} \cup \\
       & \{(k,s),(s,k) | \partial h / \partial \beta_k + \lambda_1 w_k \text{sign}
	(\beta_k) < 0\} \cup \\
       & \{(k,r),(r,k) | \partial h / \partial \beta_k + \lambda_1 w_k \text{sign}
	(\beta_k) > 0\}
\end{align*}
such that we keep all the edges in $\mathcal{G}$ that lie in $P_i$ and
add an edge to the source if we pull upwards on the coefficient and
an edge to the sink if we pull downwards. For the maximum flow problem to 
be complete, we now only have to specify the capacities on each of the nodes. These
correspond to the tensile strength of the cables in our little mental image
for edges between nodes in $P_i$ or the strength at which we pull on a node
for edges to either the source or the sink. More specifically, let
$c_{kl}$ be the capacity between nodes $k$ and $l$. Then
\begin{align*}
c_{kl} &= c_{lk} = \lambda_2 w_{kl} \text{ for } (k,l) \in E_i^M \text{ and } k,l \in P_i,\\
c_{sk} &= -\left(\frac{\partial h}{\partial \beta_k}+\lambda_1 w_k \text{sign}
  (\beta_k) \right) \text{ and } c_{ks} = 0 \text{ for } (k,s) \in E_i^M,\\
    \intertext{and}
c_{kr} &= \left(\frac{\partial h}{\partial \beta_k} + \lambda_1 w_k \text{sign}		
  (\beta_k)\right) \text{ and } c_{rk} = 0 \text{ for } (k,r) \in E_i^M.
\end{align*}
For this graph $\mathcal{G}_i^M$ we find the maximum possible flows and then
look at the residual graph $\mathcal{G}_i^R$. 
In our mental image,
the maximum flow corresponds to the maximum force we can exert before some strings
start breaking. If some edges from the source still have residual capacity,
this means that we could pull harder upwards, therefore breaking off some nodes -
exactly the ones that are still connected to the source in the residual graph.
The same is true for available residual capacity to the sink. Therefore we define
\begin{align*}
F_{i+} &= \{k \in P_i | k \text{ is connected to } s \text{ in }
 \mathcal{G}_{i}^R\}\\
F_{i-} &= \{k \in P_i | k \text{ is connected to } r \text{ in }
 \mathcal{G}_{i}^R\} \\
F_{i0} &= P_i \backslash (F_{i+} \cup F_{i-})
\end{align*}
These sets are not necessarily connected, in which case we separate them into their 
connected components. These are then the new fused sets, based on which we define
the fused matrices and vectors and run the component-wise algorithm. 

For the sets $P_i$ for which $\beta_k = 0$ for $k \in P_i$, things are just a 
little more complicated, as we also have to overcome the $\lambda_1$-penalty. 
This means that we have to define two flow graphs, one for trying to get $\beta_k$
to be positive, one for making $\beta_k$ negative. The first
graph is $\mathcal{G}_{i+}^M$, which is defined analogous to $\mathcal{G}_i^M$
above, except that instead of 
$\partial h / \partial \beta_k + \lambda_1 w_k \text{sign}(\beta_k)$ we use
$\partial h / \partial \beta_k + \lambda_1 w_k$, i.e. assume that 
$\text{sign}(\beta_k) = 1$. Correspondingly, we also define $\mathcal{G}_{i-}^M$
using $\partial h / \partial \beta_k - \lambda_1 w_k$, i.e. under the 
assumption that $\text{sign}(\beta_k) = -1$.
With $\mathcal{G}_{i+}^M$ and the correspond residual graph $\mathcal{G}_{i+}^R$,
we can now determine if we can break of a set that will become positive and using
$\mathcal{G}_{i-}^R$ if a set can become negative. These are then defined as
\begin{align*}
F_{i+} &= \{k \in P_i | k \text{ is connected to } s \text{ in }
 \mathcal{G}_{i+}^R\}\\
F_{i-} &= \{k \in P_i | k \text{ is connected to } r \text{ in }
 \mathcal{G}_{i-}^R\} \\
F_{i0} &= P_i \backslash (F_{i+} \cup F_{i-}).
\end{align*}
This now specifies how to generate the fused sets $\mfF$ using $\bbeta$ and $\mfP$.
In the next section, we will now present the complete algorithm and a proof that it
converges to the global optimum can be found in the appendix.

\subsection{The complete algorithm \label{Sec:CompleteAlgorithm}}
We now have specified everything that we need in order to run the algorithm.
We know how to define the sets $\mfP$ and $\mfF$ as well as given the grouping
of the coefficients how to calculate $\tilde{X}$, $\tilde{y}$, 
$\tilde{w}$ and $\tilde{\mathcal{G}}$ in order to apply the
component-wise algorithm to $\tilde{g}=g(\tilde{\bX}, \tilde{\by}, \tilde{\bbeta},
\tilde{\bw}, \tilde{\mathcal{G}}, \blambda)$. Then, given the solution of the
component-wise run, we specify a new grouping. We repeat
these steps until convergence. 

As the application of the maximum flow algorithm is relatively expensive, we add
one more rule in order to apply it as little as possible.
\begin{description}
\item[Fuse sets] First, we do not check any sets if they have to broken up and just set
    $F_i = P_i, \; i=1, \ldots, |\mfP|$. Run the component-wise algorithm.
\item[Split active sets] If we previously used the \emph{Fuse sets} rule 
    and $\bbeta$ did not change, then check sets $P_i$ with 
    $\beta_k \neq 0 \text{ for } k\in P_i$
    if they should be split up. Run the component-wise algorithm.
\item[Split inactive sets] If we previously used the \emph{Split active sets} rule
and $\bbeta$ did not change, then check all sets with $\beta_k = 0 \text{ for } k \in P_i$. Run the component-wise algorithm.
\end{description}
Using this we can avoid to calculate maximum flows until we have exhausted
the much easier and computationally cheaper fusion of sets. The complete
procedure is given in more detail in Algorithm \ref{Alg:CompleteCoordinate}.

\begin{algorithm}
%    \SetLine
    \KwData{$\by$, $\bX$, $\bbeta_0$}
    \KwResult{$\bbeta$ optimizing $g(\bX, \by, \bbeta, \bw, \mathcal{G}, \blambda)$}

    $\bbeta := \bbeta_0$\;
    $F_i := \{i\} \; i=1, \ldots, p$\;
    \Repeat{$F_i; \; i=1, \ldots, |F|$ did not change}{
        Based on $F_i; i=1, \ldots, |F|$ calculate $\tilde{\bbeta}$, $\tilde{X}$, $\tilde{y}$, 
        $\tilde{w}$ and $\tilde{\mathcal{G}}$\;
          
        Apply naive coordinate-wise to $\tilde{g}=g(\tilde{\bX}, \tilde{\by}, \tilde{\bbeta},
\tilde{\bw}, \tilde{\mathcal{G}}, \blambda)$

        Determine the new sets $F_i$ as described above according to \emph{Fuse sets},
        \emph{Split active sets} and \emph{Split inactive sets} rule\;
    }
    \caption{Coordinate-wise optimization algorithm for the Fused Lasso with grouping step.}
    \label{Alg:CompleteCoordinate}
\end{algorithm}

Now we can prove that this algorithm is guaranteed to converge to the global optimum.
\begin{thm}
Assume we want to minimize expression \ref{Eq:FusedLasso}. Then Algorithm 
\ref{Alg:CompleteCoordinate} is guaranteed to converge to the global optimum.
\label{Thm:CompleteCoordinate}
\end{thm}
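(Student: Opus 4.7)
The plan is to exploit the convexity of $g$ and verify the subgradient optimality conditions at any fixed point of the outer loop. Because $g$ is convex, $\bbeta^{\star}$ is a global minimizer if and only if $\mathbf{0} \in \partial g(\bbeta^{\star})$, so it suffices to prove this subgradient condition at a limit point of Algorithm \ref{Alg:CompleteCoordinate}.

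I would first establish termination. Each run of the naive coordinate-wise routine monotonically decreases $\tilde{g}$ by Theorem \ref{Thm:NaiveCoordinate}, and because the fused constraints are feasible restrictions of the original $g$, the outer iterates satisfy $g(\bbeta^{(m+1)}) \le g(\bbeta^{(m)})$. Since there are only finitely many possible partitions of $\{1,\ldots,p\}$, and since strict decrease prevents revisiting a partition at the same $\bbeta$, the outer loop must eventually reach a state in which the fused sets are stable across a full cycle of the three rules \emph{Fuse sets}, \emph{Split active sets}, and \emph{Split inactive sets}. At this stable state, Theorem \ref{Thm:NaiveCoordinate} applied to $\tilde{g}$ guarantees that $\bbeta$ is a coordinate-wise minimum of $\tilde{g}$, which yields, for each set $P_i \in \mfP$, an aggregate balance equating the total ``force'' $\sum_{k \in P_i}\bigl(\partial h/\partial \beta_k + \lambda_1 w_k\, \sign(\beta_k)\bigr)$ with the outgoing $\lambda_2$-boundary contributions across $\partial P_i$.

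The heart of the proof is upgrading this aggregate identity into a coordinate-wise subgradient certificate via max-flow/min-cut duality. Because no splitting occurs in $\mathcal{G}_i^M$, the residual graph contains no $s$-$r$ cut of value less than the total source capacity; by max-flow/min-cut this is equivalent to the existence of a feasible internal flow $f_{kl}$ with $|f_{kl}| \le \lambda_2 w_{kl}$ and node balance
\[
\partial h/\partial \beta_k + \lambda_1 w_k\, \sign(\beta_k) + \sum_{l \in P_i:\,(k,l)\in E} f_{kl} \;=\; 0 \qquad \text{for every } k \in P_i.
\]
Setting $t_{kl} := f_{kl}/(\lambda_2 w_{kl}) \in [-1,1]$ yields valid subgradients of $|\beta_k - \beta_l|$ at the common value $\beta_k = \beta_l$, and the node-balance equation is precisely the coordinate-$k$ subgradient optimality condition for $g$. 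For sets $P_i$ on which $\beta_k = 0$, the two auxiliary graphs $\mathcal{G}_{i+}^M$ and $\mathcal{G}_{i-}^M$ provide the additional slack required to select a subgradient $s_k \in [-1,1]$ for $|\beta_k|$ of either sign while still meeting the balance condition.

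The main obstacle will be the flow-to-subgradient translation in this last step: one must verify that the aggregate optimality of $\tilde{g}$ supplies exactly the right total source and sink capacities for a saturating feasible flow to exist, and that the resulting edge flows can be simultaneously interpreted as valid subgradients of all the $|\beta_k - \beta_l|$ and $|\beta_k|$ terms, with particular care when reconciling the two sign assumptions on a zero coordinate that is adjacent to non-zero neighbors. The finiteness of the partition space combined with the strict monotonicity of $g$ between genuinely distinct partitions handles the remaining concern about cycling of the outer loop.
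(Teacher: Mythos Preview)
Your overall strategy matches the paper's: argue termination of the outer loop, and at the fixed point use the saturating max-flow as a subgradient certificate by setting $t_{kl}=f_{kl}/(\lambda_2 w_{kl})$. For sets $P_i$ with $\beta_k\neq 0$ your argument is essentially the paper's Lemma~\ref{Lem:ConvergenceConditions}(b): no node reachable from $s$ or $r$ in the residual graph forces all source and sink arcs to be saturated, so flow conservation at each $k\in P_i$ is exactly the coordinate-$k$ subgradient equation.

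Two places need more than you have written. First, termination: ``finitely many partitions plus strict decrease'' does not by itself rule out an infinite outer sequence, because a given partition can host infinitely many $\bbeta$-values. The paper closes this with Lemma~\ref{Lem:FiniteOptimalPoints}: for a fixed partition $\mfP$ and a fixed zero-pattern $N=\{i:\tilde\beta_i=0\}$, the restriction of $\tilde g$ to the remaining coordinates is strictly convex, so the coordinate-wise optimum is unique. Hence there are only finitely many coordinate-wise optimal points across all $(\mfP,N)$ pairs, and monotone decrease lets the outer loop visit each at most once. Together with Lemma~\ref{Lem:FiniteSteps} (pure-fusion phases terminate because $|\mfP|$ strictly drops), this gives finite termination.

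Second, the zero-set case is not resolved by merging the flows from $\mathcal G_{i+}^M$ and $\mathcal G_{i-}^M$. The condition $F_{i+}=\emptyset$ saturates only the source arcs of $\mathcal G_{i+}^M$, and $F_{i-}=\emptyset$ saturates only the sink arcs of $\mathcal G_{i-}^M$; these are different flows on different graphs, and neither alone yields node balance with a single $s_k\in[-1,1]$ at every $k$. The paper's route is indirect (Lemma~\ref{Lem:Beta0}): assume no valid $(s_k,t_{kl})$ exist, solve an auxiliary weighted FLSA in the $s_k$ variables, observe that its optimum must then have some $s_k^0>1$, and show that the set $S=\{k:s_k^0>1\}$ has total source capacity in $\mathcal G_{i+}^M$ strictly exceeding its boundary capacity into $P_i\setminus S$, which by max-flow/min-cut forces $F_{i+}\neq\emptyset$, a contradiction. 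This auxiliary-FLSA device is the missing ingredient for the zero blocks.
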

The proof of this theorem can be found in the appendix.

\subsection{\label{Sec:Huber} Alternative algorithm using Huber penalty}
The algorithm presented above relies for convergence on solving maximum-flow problems.
When these are very large, they can become the bottleneck of the computation and slow it down.
Therefore, we propose a second algorithm that works without any maximum-flow computations
at the price of only yielding approximate solutions. 

The main part of the algorithm stays the same. We only replace the \emph{Split active sets}
and \emph{Split inactive sets} rule presented in Section \ref{Sec:CompleteAlgorithm} by an
application of the same problem using a Huber penalty with some $M > 0$, defined as
\[
p_M (x) = \begin{cases}
            \frac{M}{2} x^2 & \text{ for } -1/M \le x \le 1/M \\
            |x|- \frac{1}{2M} & \text{ otherwise}
          \end{cases}
\]
instead of an $L_1$ penalty on the differences of coefficients. It is easy to verify that $p_M(x)$ is continuous and differentiable everywhere
and that $p_M(x) \rightarrow |X|$ as $M \rightarrow \infty$. The loss function that we aim to optimize then is
\begin{align}
g_M(\bX, \by, \bbeta, \bw, \mathcal{G}, \blambda) =& 
    \frac{1}{2} (\by -\bX\bbeta)^T(\by - \bX\bbeta) + \nonumber \\
    &\lambda_1 \sum_{k=1}^p w_k|\beta_k| + \lambda_2 
    \sum_{(k,l) \in E, k < l} w_{k,l} p_M(\beta_{k} - \beta_l).
    \label{Eq:HuberLasso}
\end{align}
As for $g_M$ the penalty on differences $\beta_k - \beta_k$ is now differentiable everywhere, it can be shown that
a naive coordinate-wise optimization algorithm converges to a global optimum \citep[see][]{Tseng2001} and
that for $M \rightarrow \infty$, this global optimum converges to the global optimum of the Fused Lasso problem in
Equation (\ref{Eq:FusedLasso}). However, we will not use the approximation as given above directly to solve our
problem. The reason is that the convergence using the approximation in Equation (\ref{Eq:HuberLasso})
can be very slow when using a coordinate-wise algorithm. However, as it is differentiable everywhere,
the coordinate-wise algorithm cannot get stuck. Therefore, we will use this approximation if the 
naive algorithm presented in Section \ref{Alg:NaiveCoordinate} gets stuck in a coordinate-wise
optimum that is not the global optimum. 

To be more precise, for our algorithm we first use the naive coordinate-wise version 
presented in Algorithm \ref{Alg:NaiveCoordinate}. 
After each convergence, we \emph{fuse} all coefficients that are equal to each other and connected (i.e. the fused
sets are just $\mfP$ as defined in Section \ref{Sec:DefineFusedSets}) and reapply the naive coordinate-wise
procedure until nothing changes anymore. At this point, we may be stuck at a coordinate-wise minimum that
is not the global optimum. In order to get ``unstuck'', we apply the naive coordinate-wise procedure to the
loss function with Huber penalty $g_M$ for some value of $M$ for a number of iterations $K$. The precise value
of $M$ is not very important and we set it to 1000. As we already mentioned above, convergence of $g_M$ can take very long
so that we stop it after $K$ iterations (we set K=100), as this is usually sufficient to become ``unstuck''.
We iterate this procedure until $\bbeta$ has not changed more than $\varepsilon >0 $ according to some metric (e.g. $L_1$)
since the last application of the huberized loss function $g_M$. 

The algorithm as presented above is not guaranteed to converge, however as we will see in the simulations
section, it usually comes very close to the global optimum. The complete description can be seen in 
Algorithm \ref{Alg:CompleteCoordinateHuber}.

\begin{algorithm}
%    \SetLine
    \KwData{$\by$, $\bX$, $\bbeta_0$}
    \KwResult{$\bbeta$ optimizing $g(\bX, \by, \bbeta, \bw, \mathcal{G}, \blambda)$}

    $\bbeta := \bbeta_0$\;
    $\bbeta_{save} := \bbeta_0$\;
    $F_i := \{i\} \; i=1, \ldots, p$\;
    \Repeat{$||\bbeta - \bbeta_{save}||_1 < \varepsilon$}{
        Based on $F_i; i=1, \ldots, |F|$ calculate $\tilde{\bbeta}$, $\tilde{X}$, $\tilde{y}$, 
        $\tilde{w}$ and $\tilde{\mathcal{G}}$\;
          
        Apply naive coordinate-wise alg. to $\tilde{g}=g(\tilde{\bX}, \tilde{\by}, \tilde{\bbeta},
\tilde{\bw}, \tilde{\mathcal{G}}, \blambda)$\;

        Set $\mfF = \mfP$\;
        \If{$\mfF$ did not change}{
            Apply naive coordinate-wise alg. for at most $K$ steps to $g_M(\bX, \by, \bbeta, \bw, \mathcal{G}, \blambda)$\;
            Set $\bbeta_{save} := \bbeta$ \;
        }
        
    }
    \caption{Coordinate-wise optimization algorithm for the Fused Lasso employing Huber penalty.}
    \label{Alg:CompleteCoordinateHuber}
\end{algorithm}

All of these algorithms are not restricted to using squared error loss and therefore we will
quickly discuss some extensions to logistic regression aw well as the Cox proportional
hazards model.

\subsection{Extensions to logistic regression and the Cox proportional hazards model}
Of course, the Fused Lasso model is not restricted to squared error loss. 
Just as with the Lasso or the elastic net, we can apply the same mechanism also to
generalized linear models. By using iteratively reweighted least squares (IRWLS),
we can then leverage our algorithm for solving the Fused Lasso with squared error loss
to generalized linear models. Here, we will briefly describe the implementation for
logistic regression \citep[see also][]{FHT2010} and the Cox proportional hazards
model.

\subsubsection{Logistic regression}
In the setting of a binary response variable $\bY$ (with values $0$ and $1$),
logistic regression can be used. Here, the conditional probability of success 
conditioned on predictor vector $\bx$ is usually expressed as 
\[
P(Y=1|\bx) = \frac{1}{1+\exp(-\bx^T\bbeta)}   
\]
together with the Fused Lasso penalties, the function that we want to minimize is
\[
-\sum_{k=1}^n \log(P(Y=y_k|\bx_k)) + \lambda_1 \sum_{k=1}^p w_k|\beta_k| + 
    \lambda_2 \sum_{(k,l)\in E; k<l} w_{kl}|\beta_k - \beta_l|.
\]
In order to do this, we iteratively apply a quadratic approximation to the log-likelihood
$-\sum_{i=1}^n \log(P(Y=y_i|\bx_i))$ at the current estimates in step $m$, $\bbeta^{(m)}$. 
The exact details of this derivation can be looked up in many textbooks on generalized linear
models \citep[see e.g.][]{GeneralizedLinearModels} and therefore we omit them here. 
The resulting quadratic approximation has then the form
\[
\frac{1}{2} \sum_{k=1}^n v_k (z_k - \bx_k^T\bbeta)^2 + \lambda_1 \sum_{k=1}^p w_k|\beta_k| + \lambda_2 \sum_{(k,l)\in E; k<l} w_{kl}|\beta_k - \beta_l|
\]
where the working response $\bz$ and the weights $\bv$ are given by
\begin{align*}
z_k &= \bx_k^T \bbeta^{(m)} + \frac{y_k - p(\bx_k,\bbeta^{(m)})}
    {p(\bx_k,\bbeta^{(m)})(1 - p(\bx_k,\bbeta^{(m)}))}\\
v_k &= p(\bx_k,\bbeta^{(m)})(1 - p(\bx_k,\bbeta^{(m)})).
\end{align*}

\subsubsection{Cox proportional hazards model}
Another important regression model that is often used in many applications is the Cox 
proportional hazards model. For simplicity, we assume that there are no ties
between the event times $t_i$. Then the partial likelihood function is given by
\[
L(\bbeta) = \prod_{k=1}^n \left(\frac{\exp(\bx_k^T\bbeta)}{\sum_{l \in R(t_k)}
\exp(\bx_l^T \bbeta)} \right)^{\delta_k}
\]
where $t_k$ are the event times, $\delta_k = 0$ if the observation is censored and
$\delta_k=1$ otherwise, and $R(t_k) = \{l : t_l \ge t_k\}$ are the sets of
individuals at risk at time $t_k$.

As in the case of the Logistic regression model, we also want to approximate the
Cox model by a quadratic function around the current estimate of $\bbeta^{(m)}$. 
If we let 
\[
\bd = \frac{\partial \log L(\bbeta^{(m)})}{\partial \bbeta} \quad \text{and} \quad 
\bQ = \frac{\partial^2 \log L(\bbeta^{(m)})}{(\partial \bbeta)^2}
\]
then the quadratic approximation to $-\log L$ is
\[
-\log L(\bbeta^{(m)}) - \bd^T(\bbeta - \bbeta^{(m)}) - (\bbeta - \bbeta^{(m)})^T
    Q (\bbeta - \bbeta^{(m)}).
\]
The exact form of $\bd$ and $\bQ$ is easy to evaluate and can be 
found in many standard
textbooks \citep[see e.g.][]{GeneralizedAdditiveModels}. Specifically they are
\begin{align*}
\bd &= \bdelta^T \bX - \sum_{k=1}^n \delta_k \frac{\sum_{l \in R(t_k)} \exp(\bx_l^T \bbeta^{(m)} )}
{\sum_{l\in R_(t_k)} \exp(\bx_l^T \bbeta^{(m)})} \\
\bQ &= \bX^T \bW \bX \quad \text{with} \\
w_{kk} &= -\exp(\bx_k^T \bbeta^{(m)}) \sum_{l:k\in R_l} 
\left(\frac{1}{\sum_{j \in R_l} \exp(\bx_j^T \bbeta^{(m)})} + 
    \frac{\exp(2\bx_k^T \bbeta)}{\left(\sum_{j \in R_l} \exp(x_j^T \bbeta^{(m)}\right)^2}\right)\\
w_{kk'} &=  -\exp(\bx_k^T \bbeta^{(m)}) \exp(\bx_{k'}^T \bbeta^{(m)}) 
\sum_{l:k,k' \in R_l}
\frac{1}{\left(\sum_{j\in R_l}\exp(\bx_j^T\bbeta^{(m)})\right)^2}.
\end{align*}
The main problem is that $\bW$ is not a diagonal matrix, which greatly increases
the computational complexity of the calculation. \citet{GeneralizedAdditiveModels} suggest
to just use the diagonal of matrix $\bW$ for the computations, which we will do. However,as we are not optimizing the step size of the algorithm, this approach may lead to convergence problems, especially in the case of highly correlated predictors. As an ad-hoc solution to this problem, we use an additional diagonal matrix $\bD$ such that our approximate Hessian $\tilde{\bQ}$ is then
\[
\tilde{\bQ} = \bD \bX^T \text{Diag}{\bW} \bX \bD
\]
where $\bD$ is chosen such that $\text{Diag}(\tilde{\bQ}) = \text{Diag}
(\bQ)$. While we do not have a rigorous theoretical justification, we found that in 
practice this gave good convergence of the algorithm, while only requiring the
computation of the 
diagonal of $\bQ$, which is computationally still feasible.

\section{\label{Sec:Simulations}Simulations}
In this section, we want to demonstrate the speed of our new algorithm. For this,
we will simulate data and compare the speed of the naive, the maximum-flow based
and the Huber penalty based Fused Lasso algorithms to other methods discussed below. 
Additionally, as the naive and Huber penalty based methods are only approximate, 
we will also evaluate their performance in terms of accuracy with respect to the
exact Fused Lasso solution.

\subsection{Methods used}
To the best of our knowledge, there is currently no 
specialized software available that solves
the Fused Lasso for general matrices $\bX$ and general graphs $\mathcal{G}$.
Therefore we will use the \texttt{CVX}, a package for specifying and solving convex 
programs in Matlab \citep{cvx, gb08}. It is very versatile and is in our 
opinion the type of off-the shelf software that would be used to solve the Fused Lasso
in absence of other specialized software.

As the current algorithm is an extension of a coordinate-wise algorithm, we
will also compare the speed of our method to an implementation of this approach,
namely the package \texttt{glmnet} from \citet{FHT2007}. The \texttt{glmnet} package
implements a coordinate-wise descent algorithm to solve the elastic net, 
a combination of the lasso and ridge regression \citep[see][]{Zou2005}. It minimizes the
loss functions
\[
\frac{1}{2}(\by -\bX\bbeta)^T(\by - \bX\bbeta) + \lambda \alpha \sum_{k=1}^p w_k |\beta_k| + \lambda (1-\alpha) 
    \sum_{k=1}^{p-1} w_k \beta_k^2.
\]
Due to the simpler structure of the loss function that is being optimized
and the therefore simpler algorithm, we expect the \texttt{glmnet} package
to be faster than our implementation.

\subsection{\label{Sec:NormalData}Continuous response data example} % one and two-dimensional
For the simulation we have to generate a predictor matrix $\bX$ and 
a vector of true coefficients $\bbeta$. 

Here we distinguish if we want to generate data for simulations for a one- or
a two-dimensional graph. 

\paragraph{One-dimensional data}
For the predictor matrix, all observations are being generated independently
of each other. Let $p$ be the number of columns in the predictor matrix.
In order to give the matrix some structure, we set the matrix to a value
different from 0 for some intervals before adding Gaussian noise to it.
For observation $i$ let $n_i \sim \text{Poisson}(\sqrt{p}/2)$ be the number of 
such intervals, which have length $l_{ij} \sim \text{Poisson}(\sqrt{p})$, starting position
$s_{ij} \sim U(2-l_{ij}, p)$ and value
$v_{ij} \sim U(\{-3,-2,-1,0,1,2,3\})$, all independent for $i=1, \ldots, n; j=1, \ldots, n_i$.
For this we then set
\[
X_{ik} = v_{ij} + \gamma_{ik} \text{ for } s_{ij} \le k \le s_{ij} + l_{ij} -1; \quad i=1,\ldots, n; j=1, \ldots, n_i
\]
where $\gamma_{ik} \sim N(0,1) \; i.i.d.$ and in case of multiple assignment the one with the highest index $j$ takes precedence.

For the vector of coefficients, we choose
such that $\bbeta$ is 0 except for a sequence of length 100 in the middle.

Our response is then as usual
\[
y_i = (\bX\bbeta)_i + \varepsilon_i 
\]
where $\varepsilon_i \sim N(0, \sigma)$ for $\sigma = 10$.

\paragraph{Two-dimensional data}
We generate the two-dimensional data in a similar way. For ease of notation,
we assume that the predictor matrix is a 3-dimensional array (and cast it back into
a 2-dimensional matrix when we specify $\by$). Here let
$\bX \in \mathds{R}^{n \times p^2}$ and for observation $i$ we have $n_i \sim \text{Pois}(\sqrt{p})$
again the number of boxes. The boxes have a random length on both axis which is drawn from
$l_{ij}^{(1)},l_{ij}^{(2)} \sim \text{Pois}(\sqrt{p})$ and a uniform starting position on the axis 
using $s_{ij}^{(1)} \sim U(2-l_{ij}^{(1)}, p)$ and $s_{ij}^{(2)} \sim U(2-l_{ij}^{(2)}, p)$. The value of the 
box we again draw from $v_{ij} \sim U(\{-3,-2,-1,0,1,2,3\})$ where all these random variables
are all independent for $i=1, \ldots, n; j=1, \ldots, n_i$. Using all this we set
\[
X_{ik^{(1)}k^{(2)}} = v_{ij} + \gamma_{ik^{(1)}k^{(2)}} \text{ for } s_{ij}^{(1,2)} \le k^{(1,2)} \le s_{ij}^{(1,2)} + l_{ij}^{(1,2)} -1; \quad i=1,\ldots, n; j=1, \ldots, n_i
\]
where again $\gamma_{ik^{(1)}k^{(2)}} \sim N(0,1) \; i.i.d.$ and highest index $j$ takes precedence.

We also interpret the vector of coefficients $\bbeta$ as two-dimensional and pick a square in
the middle of length 10, i.e.
\[
\beta_{k^1k^2} = \begin{cases}
1 & \text{ if } p/2-4 \le k^{1,2} \le p/2+5 \\
0 & \text{ otherwise }
\end{cases}
\]
and the response is as before
\[
y_i = (\bX\bbeta)_i + \varepsilon_i 
\]
where $\varepsilon_i \sim N(0, \sigma)$ for $\sigma = 10$.

\subsection{Speed}
We now want to compare the speed of the algorithm and its approximations with \texttt{CVX} and \texttt{glmnet}.
In this comparison, it is important to note that \texttt{glmnet} solves a simpler problem than the 
Fused Lasso, but we include it as a faster benchmark.

Here, we run the algorithm for 20 values of $\lambda_2$ that span the relevant range from $[\lambda_2^{max}/10000, \lambda_2^{max}]$
using an exponential grid (i.e. there are about 5 values for $\lambda_2$ for each order of magnitude). 
Here $\lambda_2^{max}$ is the value for $\lambda_2$ for which when $\lambda_1=0$ all coefficients have
the same value. For each of the $\lambda_2$-values, we compute the solution for 50 values of $\lambda_1$ ranging in an 
exponential grid from $[\lambda_1^{max}/10000, \lambda_1^{max}]$ (i.e. $\approx 12$ values for every factor 10), 
where $\lambda_1^{max}$ is defined equivalently to $\lambda_2^{max}$,
guaranteeing that we cover the entire relevant range of values. In all these algorithms, we stop the computation
when more than $2 \cdot n$ elements of $\bbeta$ are non-zero as these results would be considered especially unreliable
and computation can become very slow with very many active variables.

The results can be seen in Tables \ref{Table:NormalOneDimSimSpeed} and \ref{Table:NormalTwoDimSimSpeed}. Any version
of the Fused Lasso algorithm is orders of magnitude faster than the results of the \texttt{CVX} package,
thereby making it possible to run the Fused Lasso for larger problems. Also, as expected, the 
\texttt{glmnet} package is faster than any of the Fused Lasso methods. This is easily explained as the underlying
mathematical problem that the \texttt{glmnet} procedure solves is considerably simpler 
than the Fused Lasso model. As discussed before, we included it as a faster benchmark. 
For the different Fused Lasso procedures we can see that naive implementation
is up to an order of magnitude faster than the exact procedure that is based on the maximum-flow algorithms.
However, as we will see later the accuracy of the naive method can be quite poor. The Huber penalty based
approach is usually as fast as the maximum-likelihood based method, but has speed advantages in cases
where $p >> n$. This can also be seen in Table \ref{Table:Largep}.

\begin{table}
\centering
\begin{tabular}{|c|c|c|c|c|c|c|} \hline
\multicolumn{2}{|c|}{Param.} & \multicolumn{2}{|c|}{Other} & \multicolumn{3}{|c|}{FusedLasso} \\ \hline
n & p & Glmnet & CVX & exact & naive & Huber \\ \hline
100 & 1000 & 1.14 & 5883 & 19.39 & 3.32 & 16.90\\
100 & 3000 & 3.71 & 21363 & 87.73 & 10.11 & 50.67 \\
100 & 5000 & 6.90 & 40691 & 196.56 & 15.67 & 90.82 \\
200 & 1000 & 2.09 & 16582 & 37.99 & 7.47 &  39.08 \\
200 & 3000 & 6.72 & 65328 & 132.66 & 20.16 & 94.57\\
200 & 5000 & 12.15 & 122197 & 277.13 & 32.17 & 176.19  \\
600 & 1000 & 4.02 & 107820 & 105.40 & 24.74 & 144.98  \\
600 & 3000 & 18.37 & - & 345.16 & 58.01 & 331.57 \\
600 & 5000 & 27.27 & - & 576.41 & 82.46 & 482.28 \\
1000 & 1000 & 73.80 & 303187 & 204.34 & 49.60 & 321.34 \\
1000 & 3000 & 27.59 & - & 574.31 & 98.23 & 618.09\\
1000 & 5000 & 51.17 & - & 1042.83 & 163.51 & 1007.66 \\ \hline
\end{tabular}
\caption{Speed of the algorithms for continuous data with one-dimensional graph in seconds.}
\label{Table:NormalOneDimSimSpeed}
\end{table}

\begin{table}
\centering
\begin{tabular}{|c|c|c|c|c|c|c|} \hline
\multicolumn{2}{|c|}{Param.} & \multicolumn{2}{|c|}{Other} & \multicolumn{3}{|c|}{FusedLasso} \\ \hline
n & p & Glmnet & CVX & exact & naive & Huber \\ \hline
100 & 30x30 & 1.55 & 19867 & 63.05 & 13.10 & 34.59\\
100 & 50x50 & 5.93 & 29058 & 264.62 & 26.36 & 97.06\\
100 & 70x70 & 12.32 & 64653 & 780.88 & 37.48 & 160.04\\ 
200 & 30x30 & 3.07 & 42311 & 126.37 & 30.22 & 83.87\\
200 & 50x50 & 10.18 & 67770 & 374.12 & 52.49 & 173.29\\
200 & 70x70 & 19.32 & 165690 & 933.70 & 62.30 & 267.12\\ 
600 & 30x30 & 4.67 & 134887 & 173.62 & 43.36 & 152.36\\
600 & 50x50 & 25.51 & - & 728.01 & 125.85 & 493.39\\ 
600 & 70x70 & 35.94 & - & 1584.98 & 166.30 & 686.73\\ 
1000 & 30x30 & 51.61 & 275220 & 209.46 & 53.16 & 195.20\\
1000 & 50x50 & 36.43 & - & 1179.99 & 214.05 & 848.18\\ 
1000 & 70x70 & 36.03 & - & 1807.08 & 221.78 & 895.51\\ \hline
\end{tabular}
\caption{Speed of the algorithms for continuous data with two-dimensional graph in seconds.}
\label{Table:NormalTwoDimSimSpeed}
\end{table}

\begin{table}
\centering
\begin{tabular}{|c|c|c|c|c|c|} \hline
\multicolumn{2}{|c|}{Param.} & \multicolumn{1}{|c|}{Other} & \multicolumn{3}{|c|}{FusedLasso} \\ \hline
n & p & Glmnet & exact & naive & Huber \\ \hline
100 & 10000 & 31.25 & 753.5 & 65.87 & 291.633\\
100 & 20000 & 60.62 & 2509.4 & 115.89 & 515.66\\
100 & 100x100 & 24.21 & 3161.8 & 50.6 & 289.1\\
100 & 150x150 & 46.55 & 22164.49 & 86.6 & 502.1 \\ \hline
\end{tabular}
\caption{Speed of the algorithms for continuous data with one- and two-dimensional graph for
large number of variables in seconds.}
\label{Table:Largep}
\end{table}

\subsection{Accuracy}
In Section \ref{Sec:Algorithm} we already mentioned that the naive algorithm as well as 
the Huber penalty based version are not guaranteed to converge to the global optimum. Therefore,
we want to assess how far away from the global optimum, that we get with the exact 
maximum-flow based algorithm, the solutions are. As the solutions and therefore also the
error are dependent on the choice of the penalty parameter, we report the worst-case
scenario over the whole grid of $\lambda_1$ and $\lambda_2$ values. If we let
$\hat{\bbeta}^{\text{exact}}(\lambda_1, \lambda_2)$, $\hat{\bbeta}^{\text{Huber}}(\lambda_1, \lambda_2)$ and
$\hat{\bbeta}^{\text{naive}}(\lambda_1, \lambda_2)$ be the solution for penalty parameters
$(\lambda_1, \lambda_2)$ for the maximum-flow algorithm (exact solution), Huber penalty approach
and naive algorithm respectively, then we define the reported accuracy as
\[
\max_{\lambda_1, \lambda_2} err\left(\hat{\bbeta}^{\text{exact}}(\lambda_1, \lambda_2) - \hat{\bbeta}^{\text{Huber}}(\lambda_1, \lambda_2)\right)
\]
and
\[
\max_{\lambda_1, \lambda_2} err\left(\hat{\bbeta}^{\text{exact}}(\lambda_1, \lambda_2) - \hat{\bbeta}^{\text{naive}}(\lambda_1, \lambda_2)\right)
\]
for the error functions
\begin{align*}
err(x) &= \frac{1}{p} \sum_{i=1}^p |x_i| = ||x||_1/p \\
err(x) &= \sqrt{\frac{1}{p} \sum_{i=1}^p x_i^2} = RMSE\\
err(x) &= \max_{i=1, \ldots, p} |x_i| = ||x||_\infty.
\end{align*}

The accuracies for the one-dimensional and two-dimensional simulations can be seen in Tables \ref{Table:NormalOneDimSimAccuracy} and 
\ref{Table:NormalTwoDimSimAccuracy}. As we can see there, the accuracy of the Huber penalty based approach is in general very good,
showing small errors for $||\cdot||_1/p$ and the RMSE. Looking at $||\cdot||_\infty$, we can see that even for every single component,
errors are usually below 0.1 (the non-zero elements in the true $\bbeta$-vector are set to 1). Overall, the Huber penalty based
approximation gives a good performance, while the error for the naive approach is usually an order of magnitude larger and
in some cases very high (see $||\cdot||_\infty$-measure).

\begin{table}
\centering
\begin{tabular}{|c|c|c|c|c|c|c|c|} \hline
\multicolumn{2}{|c|}{Param.} & \multicolumn{2}{|c|}{$||\cdot||_1/p$} & \multicolumn{2}{|c|}{RMSE} & \multicolumn{2}{|c|}{$||\cdot||_\infty$}\\ \hline
n & p & naive & Huber & naive & Huber & naive & Huber\\ \hline
100 & 1000 & 0.088 & 0.0077 & 0.19 & 0.016 & 1.36 & 0.16\\
100 & 3000 & 0.033 & 0.0040 & 0.11 & 0.0096 & 1.39 & 0.11\\
100 & 5000 & 0.018 & 0.0043 & 0.079 & 0.011 & 1.16 & 0.15\\
200 & 1000 & 0.077 & 0.0023 & 0.13 & 0.0044 & 0.80 & 0.044\\
200 & 3000 & 0.030 & 0.0025 & 0.083 & 0.0052 & 0.84 & 0.077\\
200 & 5000 & 0.0188 & 0.0028 & 0.057 & 0.0060 & 0.90 & 0.080\\
600 & 1000 & 0.16 & 0.030 & 0.27 & 0.034 & 1.17 & 0.25 \\
600 & 3000 & 0.023 & 0.0010 & 0.045 & 0.0022 & 0.38 & 0.026\\
600 & 5000 & 0.013 & 0.0014 & 0.032 & 0.0031 & 0.46 & 0.034 \\
1000 & 1000 & 0.16 & 0.024 & 0.27 & 0.032 & 1.05 & 0.30 \\
1000 & 3000 & 0.017 & 0.0017 & 0.046 & 0.0033 & 0.50 & 0.077 \\
1000 & 5000 & 0.008 & 0.0012 & 0.023 & 0.0024 & 0.37 & 0.026 \\ \hline
\end{tabular}
\caption{Accuracy of the algorithms for continuous data with one-dimensional graph. Accuracy
is measured as $||\cdot||_1/p$, root mean-squared error (RMSE) and $||\cdot||_\infty$, where for each of these measures,
the worst result over all $\lambda_1, \lambda_2$ combinations is being reported and averaged over the 10 simulations.}
\label{Table:NormalOneDimSimAccuracy}
\end{table}

\begin{table}
\centering
\begin{tabular}{|c|c|c|c|c|c|c|c|} \hline
\multicolumn{2}{|c|}{Param.} & \multicolumn{2}{|c|}{$||\cdot||_1/p$} & \multicolumn{2}{|c|}{RMSE} & \multicolumn{2}{|c|}{$||\cdot||_\infty$}\\ \hline
n & p & naive & Huber & naive & Huber & naive & Huber\\ \hline
100 & 30x30 & 0.052 & 0.016 & 0.13 & 0.045 & 0.83 & 0.35\\
100 & 50x50 & 0.015 & 0.0051 & 0.066 & 0.024 & 0.85 & 0.32\\
100 & 70x70 & 0.008 & 0.0021 & 0.049 & 0.013 & 0.89 & 0.25\\ 
200 & 30x30 & 0.048 & 0.0085 & 0.095 & 0.018 & 0.59 & 0.11\\
200 & 50x50 & 0.019 & 0.0035 & 0.062 & 0.012 & 0.64 & 0.13\\
200 & 70x70 & 0.009 & 0.0017 & 0.042 & 0.0083 & 0.54 & 0.13\\ 
600 & 30x30 & 0.18 & 0.0062 & 0.28 & 0.012 & 0.94 & 0.091\\
600 & 50x50 & 0.026 & 0.0017 & 0.048 & 0.0045 & 0.31 & 0.048\\ 
600 & 70x70 & 0.013 & 0.0013 & 0.034 & 0.0039 & 0.32 &  0.051\\ 
1000 & 30x30 & 0.17 & 0.0074 & 0.26 & 0.011 & 0.84 & 0.084\\
1000 & 50x50 & 0.023 & 0.0012 & 0.037 & 0.0024 & 0.26 & 0.035\\ 
1000 & 70x70 & 0.015 & 0.0011 & 0.029 & 0.0026 & 0.23 & 0.036\\ \hline
\end{tabular}
\caption{Accuracy of the algorithms for continuous data with two-dimensional graph. Accuracy
is measured as $||\cdot||_1/p$, root mean-squared error (RMSE) and $||\cdot||_\infty$, where for each of these measures,
the worst result over all $\lambda_1, \lambda_2$ combinations is being reported and averaged over the 10 simulations.}
\label{Table:NormalTwoDimSimAccuracy}
\end{table}

\section{\label{Sec:Discussion}Discussion}
In this article we have presented two novel algorithms based on coordinate-wise optimization 
that solve the Fused Lasso, both of which
are considerably faster than currently available methods. For the maximum-flow based approach
we have proven that it is guaranteed to converge to the global optimum, however, for problems
with large numbers of variables this can be slow due to the complexity of maximum-flow algorithms. 
In order to remedy this problem, we also introduced a Huber penalty based procedure that does not
rely on maximum-flow problems and shows better performance for situations with $p \gg n$.

Apart from this we also extended the Fused Lasso to allow for more general penalty structures
by using arbitrary undirected graphs and weights as well as more response types
by implementing logistic regression as well as the Cox proportional hazards model. 
An implementation of this algorithm will be provided in the \texttt{R}
package \texttt{FusedLasso} that will be available on \texttt{CRAN}.

\appendix
\section{A brief introduction to subgradients}
In convex optimization, often some of the functions in a problem are not 
differentiable everywhere. In this case, instead of a regular gradient, we can 
use a subgradient. The following introduction is mostly taken from 
\citet{Bertsekas1999}. As a starting point, we first want to define what a 
subgradient is and then describe a condition that guarantees that a convex 
function has a minimum at a point $x$. Afterwards, we will derive the relevant 
subgradient expressions that are being used in this article.
\begin{definition}
Given a convex function $f : \mathds{R}^n \rightarrow \mathds{R}$, we say a 
vector $d \in \mathds{R}^n$ is a \emph{subgradient} of $f$ at a point $x \in \mathds{R}^n$ if
$$
f(z) \ge f(x) + (z-x)'d, \quad \forall \, z \in \mathds{R}^n.
$$
If instead $f$ is a concave function, we say that $d$ is a subgradient 
of $f$ if $-d$ is a subgradient of $-f$ at $x$. The set of all subgradients of 
a convex function $f$ at $x \in \mathds{R}^n$ is called the sub-differential of 
$f$ at $x$, and is denoted by $\partial f(x)$.
\end{definition}
For subgradients, some basic properties similar to regular gradients hold and 
the proof to the following Proposition can be found in \citet{Bertsekas1999} on pp. 712-716.
\begin{prop}
Let $f : \mathds{R}^n \rightarrow \mathds{R}$ be convex. For every $x \in \mathds{R}^n$, the following hold:
\begin{enumerate}
\item If $f$ is equal to the sum $f_1 + \cdots + f_m$ of convex functions 
$f_j : \mathds{R}^n \rightarrow \mathds{R},
j=1, \ldots, m$, then $\partial f(x)$ is equal to the vector sum 
$\partial f_1(x) + \cdots + \partial f_m(x)$.
\item If $f$ is equal to the composition of a convex function $h : \mathds{R}^m 
\rightarrow \mathds{R}$ and an $m \times n $ matrix $\bA$, that is [$f(x) = h(\bA x)$], 
then $\partial f(x)$ is equal to $\bA' \partial h(\bA x) = \{\bA'g : g \in \partial h(\bA x)\}$.
\item $x$ minimizes $f$ over a convex set $\mathcal{A} \subset \mathds{R}^n$ if 
and only if there exists a subgradient $d \in \partial f(x)$ such that
$$
d'(z-x)\ge 0. \quad \forall z \in \mathcal{A}.
$$
\end{enumerate}
\end{prop}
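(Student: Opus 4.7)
The plan is to verify each of the three assertions in turn, exploiting the defining inequality of a subgradient as the main tool and falling back on a separation argument only where unavoidable. Throughout I will use that each $f_j$ (and $h$) is convex and finite-valued on all of $\mathds{R}^n$, which guarantees continuity and makes the usual constraint qualifications for subdifferential calculus automatic.

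For part (1) the direction $\partial f_1(x)+\cdots+\partial f_m(x)\subseteq\partial f(x)$ is immediate: pick $d_j\in\partial f_j(x)$, apply the defining inequality $f_j(z)\ge f_j(x)+(z-x)'d_j$ for each $j$, and add the resulting $m$ inequalities to obtain $f(z)\ge f(x)+(z-x)'(d_1+\cdots+d_m)$. The reverse inclusion is the substantive part. I would prove it by induction on $m$, reducing to $m=2$, and then argue as follows: let $d\in\partial f(x)$ and consider the convex sets $C_1=\{(z,\alpha):\alpha\ge f_1(z)-f_1(x)-(z-x)'d\}$ (epigraph-type set shifted by $d$) and an analogous set built from $-f_2$. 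The subgradient inequality for $f=f_1+f_2$ forces these two sets to touch only at the origin, and a supporting hyperplane through the origin yields the decomposition $d=d_1+d_2$ with $d_j\in\partial f_j(x)$. Because $f_1,f_2$ are real-valued, the required relative-interior condition for the separation step is automatic.

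For part (2) the inclusion $\bA'\partial h(\bA x)\subseteq\partial f(x)$ again follows in one line: if $g\in\partial h(\bA x)$, substitute $y=\bA z$ into $h(y)\ge h(\bA x)+(y-\bA x)'g$ to obtain $f(z)\ge f(x)+(z-x)'(\bA'g)$. For the reverse inclusion, given $d\in\partial f(x)$, I would consider the convex function $\varphi(y)=h(y)-h(\bA x)-(y-\bA x)'$ evaluated against the affine subspace $\mathrm{Range}(\bA)+\bA x$; the subgradient inequality for $f$ says $d$ works as a subgradient of $\varphi\circ \bA$ at $x$, and the finite-valuedness of $h$ lets me extend this ``partial'' subgradient to a full subgradient of $h$ at $\bA x$ via the Hahn--Banach / supporting-hyperplane argument. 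The extension $g$ then satisfies $\bA'g=d$ by construction.

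For part (3) the ``if'' half is a one-liner: given $d\in\partial f(x)$ with $d'(z-x)\ge 0$ on $\mathcal{A}$, the subgradient inequality yields $f(z)\ge f(x)+(z-x)'d\ge f(x)$. The ``only if'' direction is the content. I would apply parts (1)--(2) to the extended-real function $F=f+\delta_{\mathcal{A}}$, where $\delta_{\mathcal{A}}$ is the indicator of $\mathcal{A}$, whose subdifferential is the normal cone $N_{\mathcal{A}}(x)=\{n:n'(z-x)\le 0\;\forall z\in\mathcal{A}\}$. Since $x$ minimizes $f$ on $\mathcal{A}$, it minimizes $F$ on $\mathds{R}^n$, so $0\in\partial F(x)=\partial f(x)+N_{\mathcal{A}}(x)$, yielding $d\in\partial f(x)$ and $-d\in N_{\mathcal{A}}(x)$, which is precisely $d'(z-x)\ge 0$ on $\mathcal{A}$.

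The main obstacle is the nontrivial inclusion in parts (1) and (2), i.e.\ the Moreau--Rockafellar decomposition and chain rule, which genuinely require a separation argument rather than only the definitional inequality. All three parts reduce smoothly to that single technical device, and the finite-valuedness hypothesis (functions defined on all of $\mathds{R}^n$) ensures the qualification condition for the separation step is satisfied without extra assumptions.
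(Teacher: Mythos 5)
The paper does not actually prove this proposition; it is quoted verbatim as a known result, with the proof deferred to Bertsekas (1999), pp.\ 712--716. Your outline follows exactly the standard route one finds there and in any convex-analysis text: the easy inclusions in (1) and (2) from the defining subgradient inequality, the reverse inclusions (Moreau--Rockafellar sum rule and the chain rule) via a separating-hyperplane argument whose qualification condition is automatic because all functions are finite on $\mathds{R}^n$, and the optimality condition in (3). So the approach is correct and is, in substance, the cited proof. Two small points you should tighten if you were to write this out in full. First, the separation steps in (1) and (2) are only gestured at; the precise choice of the two convex sets to be separated (and the verification that the separating hyperplane is non-vertical, so that it actually produces a subgradient rather than a degenerate normal) is where all the work lies, and your description of $C_1$ and of $\varphi$ is not yet a construction one could check (the expression $(y-\bA x)'$ is missing the vector it is paired with). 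Second, in (3) you invoke the sum rule for $F=f+\delta_{\mathcal{A}}$, but part (1) as stated applies only to real-valued convex functions, whereas $\delta_{\mathcal{A}}$ is extended-real-valued; you need either the extended version of the sum rule with the qualification $\mathrm{ri}(\mathrm{dom}\,f)\cap\mathrm{ri}(\mathcal{A})\neq\emptyset$ (which holds here since $\mathrm{dom}\,f=\mathds{R}^n$), or the direct argument separating $\{(z,w): w>f(z)-f(x)\}$ from $\mathcal{A}\times(-\infty,0]$, which is how the cited reference proceeds. Neither issue is a wrong turn, but as written the plan defers precisely the steps that constitute the proof.
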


In our case here for the convex functions we optimize over the set 
$\mathcal{A} = \mathds{R}^p$ and therefore the last statement says that $x$ 
minimizes a function $f$ over $\mathds{R}^p$ if and only if a subgradient 
$d \in \partial f(x)$ exists such that
$$
d = 0.
$$
Now, by using the proposition from above, all we need to calculate the 
subgradient of the loss function is the subgradient of $f (x) = |x|$ which is 
$\partial f (0) = [-1, 1]$ and $\partial f(x) = \text{sign}(x)$ for $x \neq 0$. 
Therefore, the subgradient of the loss function
$$
\frac{1}{2}(\by-\bX\bbeta)^{T}(\by-\bX\bbeta)+\lambda_{1}\sum_{k=1}^{p}w_k|\beta_{k}|+
\lambda_{2}\sum_{(k,l)\in E,k<l}w_{kl}|\beta_{k}-\beta_{l}|
$$
w.r.t. $\beta_k$ is
$$
-\bx^T_k(\by-\bX\bbeta) + \lambda_{1}\sum_{k=1}^{p} w_k\frac{\partial f(\beta_{k})}{\partial \beta_k} + 
\lambda_{2}\sum_{(k,l)\in E,k<l} w_{kl}\frac{\partial f(\beta_{k}-\beta_{l})}{\partial \beta_k}
$$
and using the optimality condition from above, a solution $\bbeta$ is optimal 
if there exists $s_k$, $t_{kl}$ such that 
$$
-\bx^T_k(\by-\bX\bbeta)_k + \lambda_{1} s_k + \lambda_{2}\sum_{(k,l)\in E,k<l} t_{kl} \quad \text{for} 
\quad k=1, \ldots, p
$$
where $s_k = \text{sign} (\beta_k)$ for $\beta_k \neq 0$ and $s_k \in [-1,1]$ 
otherwise. Similarly $t_{kl} = \text{sign}(\beta_k - \beta_l)$ for $\beta_k = \beta_l$ 
and $t_{kl} \in [-1,1]$ otherwise. For notational convenience, we set 
$t_{lk} = -t_{kl}$ for all $k < l$.

\section{Proof of Theorem \ref{Thm:NaiveCoordinate}}
Before we start with the main proof, we show the following lemma, where we
guarantee that the step size of each coordinate-move in the naive coordinate-wise
algorithm has to converge to 0.

\begin{lem}
Assume that in the coordinate-wise algorithm we are starting at point $\bbeta$
and are optimizing coordinate $k_0$. Let $\hat{\bbeta}$ be the new estimate
after the optimization of coordinate $k_0$, i.e. $\beta_k = \hat{\beta}_k$ for 
all $k \neq k_0$. Then there exists a constant $a$ such that
\[
g(\hat{\bbeta}) \le g(\bbeta) - \alpha(\beta_{k_0} - \hat{\beta}_{k_0})^2.
\]
The value of $a$ is independent of $k_0$ and the starting point $\bbeta$.
Here, we have suppressed the dependence of $g$ on other variables in the model
for notational convenience.
\label{Lem:MinimalImprovement}
\end{lem}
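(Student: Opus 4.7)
The plan is to exploit the fact that $g$, viewed as a function of the single coordinate $\beta_{k_0}$ with every other coordinate held fixed, is a quadratic plus a convex piecewise-linear function, and therefore is strongly convex with a curvature that depends only on column $\bx_{k_0}$ of $\bX$. Concretely, define
\[
\phi(t) \; := \; g\bigl(\bbeta + (t-\beta_{k_0})\be_{k_0}\bigr).
\]
Expanding the squared-error term and collecting constants gives
\[
\phi(t) \; = \; \tfrac{1}{2}\|\bx_{k_0}\|^2\, t^2 + b\, t + c + \lambda_1 w_{k_0}|t| + \lambda_2 \sum_{l:(k_0,l)\in E} w_{k_0,l}\,|t - \beta_l|,
\]
where $b,c$ depend on $\bbeta_{-k_0}$, $\by$, and $\bX$. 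Writing $\phi(t) = \tfrac{1}{2}\|\bx_{k_0}\|^2 t^2 + \psi(t)$ with $\psi$ convex, we see $\phi$ is strongly convex with modulus $\|\bx_{k_0}\|^2$.

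The next step is a standard strong-convexity inequality. Since $\hat t := \hat\beta_{k_0}$ is the global minimizer of $\phi$, the optimality condition gives some subgradient $s\in\partial\psi(\hat t)$ with $s = -\|\bx_{k_0}\|^2\,\hat t$. Convexity of $\psi$ yields $\psi(t)-\psi(\hat t)\ge s(t-\hat t)$, and adding the exact quadratic expansion $\tfrac12\|\bx_{k_0}\|^2 t^2 - \tfrac12\|\bx_{k_0}\|^2\hat t^2 = \|\bx_{k_0}\|^2\hat t(t-\hat t) + \tfrac12\|\bx_{k_0}\|^2(t-\hat t)^2$ shows
\[
\phi(t) - \phi(\hat t) \; \ge \; \tfrac{1}{2}\|\bx_{k_0}\|^2\,(t-\hat t)^2.
\]
Evaluating at $t=\beta_{k_0}$ and translating back to $g$ gives the desired bound with the choice $\alpha := \tfrac{1}{2}\min_{k} \|\bx_{k}\|^{2}$, which is manifestly independent of the coordinate index $k_0$ and of the current iterate $\bbeta$.

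The only real obstacle is ensuring $\alpha>0$, i.e. that no column $\bx_k$ is identically zero. This is the standard non-degeneracy assumption for coordinate descent on Lasso-type losses: a zero column means $\beta_k$ enters $g$ only through piecewise-linear terms, the coordinate update is not strongly convex, and the variable can simply be dropped from the model. Under that mild assumption the argument above is routine; all other steps (writing out $\phi$, invoking strong convexity, and taking the minimum over $k$) are direct algebraic manipulations requiring no further machinery.
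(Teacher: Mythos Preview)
Your argument is correct and is essentially the paper's own proof. Both decompose the one-variable restriction into a quadratic with curvature $(X^TX)_{k_0k_0}=\|\bx_{k_0}\|^2$ plus a convex remainder, use optimality of $\hat\beta_{k_0}$ to obtain the strong-convexity inequality $\phi(t)-\phi(\hat t)\ge \tfrac12\|\bx_{k_0}\|^2(t-\hat t)^2$, and then minimize over $k$ to obtain a uniform constant. The only cosmetic difference is that the paper inspects the explicit piecewise-linear derivative and integrates, whereas you invoke the subgradient inequality for $\psi$; your version is slightly more abstract but amounts to the same computation. Your remark that $\alpha>0$ requires nonzero columns of $\bX$ is a point the paper leaves implicit.
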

\begin{proof}
If we just consider function $g(\bbeta)$ as a function of $\beta_{k_0}$, 
then it has the form
\[
q(\beta_{k_0}) = a_{k_0}(\beta_{k_0} - b)^2 + \sum_{i=0}^K d_i |\beta_{k_0} - c_i|
\]
where $a_{k_0} = (X^TX)_{k_0k_0}$, $d_i >0$ and $b, c_i \in \mathds{R}$. The
function $q$ is differentiable everywhere except at point $c_i$ and has the
derivative
\[
q'(\beta_{k_0}) = 2a_{k_0}(\beta_{k_0} - b) - \sum_{i|\beta_{k_0} < c_i} d_i + 
\sum_{i | \beta_{k_0} > c_i} d_i
\]
which is clearly a piecewise linear function with slope $2a_{k_0}$ and jumps for
$\beta_{k_0} = c_i$ with height $2 d_i > 0$. For the minimum $\hat{\beta}_{k_0}$
we therefore know that
\[
\lim_{\beta_{k_0} \rightarrow \hat{\beta}_{k_0}-} q'(\beta_{k_0}) \le 0 \le
\lim_{\beta_{k_0} \rightarrow \hat{\beta}_{k_0}+} q'(\beta_{k_0})
\]
and as all jumps are positive it holds
\[
q'(\beta_{k_0}) \ge 2a_{k_0}(\beta_{k_0} - \hat{\beta}_{k_0}) \quad \text{for} \quad 
\beta_{k_0} > \hat{\beta}_{k_0}
\]
and
\[
q'(\beta_{k_0}) \le 2a_{k_0}(\beta_{k_0} - \hat{\beta}_{k_0}) \quad \text{for} \quad 
\beta_{k_0} < \hat{\beta}_{k_0}.
\]
However, from this it immediately follows that
\[
q(\beta_{k_0}) \ge q(\hat{\beta}_{k_0}) + a_{k_0} (\beta_{k_0} - \hat{\beta}_{k_0})^2.
\]
This claim still holds if instead of $a_{k_0}$ we use $a = \min_k a_k$.
As $g(\bbeta_{k_0}) = q(\bbeta_{k_0})$ by construction, the claim follows.
\end{proof}

We now go on to prove Theorem \ref{Thm:NaiveCoordinate}. 
The proof closely follows
parts of the proof of \citet[Theorem 4.1]{Tseng2001}.
\begin{proof}
Assume that our algorithm starts at $\bbeta^{(0)}$ and let $\bbeta^{r}$ be
the r-th step of the component-wise algorithm. Note that 
$B=\{\bbeta | g(\bbeta) \le g(\bbeta^{(0)}) \}$
is a compact set and as each coordinate-wise step only decreases $g$, 
it follows that
$\bbeta^{r} \in B$ for all $r$. 

Now let $r \in \mathcal{R}$ be a converging subsequence with
\[
\lim_{r \in \mathcal{R}} \bbeta^{r} = \bdelta.
\]
For each $j \in \{1, \ldots, p\}$, consider the subsequence $\bbeta^{r + j}$ for
$r \in \mathcal{R}$. 
As $g$ is continuous and $g(\bbeta^{r})$ is monotonically
decreasing, we get
\[
\lim_{r \rightarrow \infty} g(\bbeta^{r}) = g(\bdelta)
\]
exist. This, together with Lemma \ref{Lem:MinimalImprovement} 
then also implies that 
$\bbeta^{r+j} - \bbeta^{r+j+1} \rightarrow 0$ and thus
\[
\lim_{r \rightarrow \infty} \bbeta^{r+j} = \bdelta
\]
so that the limit of the sequence that is shifted by $j$ 
also exists and is equal to the limit of the unshifted sequence.

Now we want to show that then, $\bdelta$ is coordinate-wise minimum w.r.t. $g$. 
Let $e_k$ be the vector that is $1$ at position $k$ and $0$ otherwise. 
Furthermore let $k(r) = (r-1 \text{ mod } p) + 1$ 
be the coordinate index that is being
optimized in step $r$. Then we know that 
\[
g(\bbeta^{r+j} + \lambda e_{k(r+j)}) \ge g(\bbeta^{r+j}) \quad \forall \lambda, 
    \forall j \in \{1, \ldots, p\}.
\]
By using the subsequence that always moves coordinate $k$ and going to the limit we
then have
\[
g(\bdelta + \lambda e_k) \ge g(\bdelta)
\]
which holds for every $k \in \{1, \ldots, p\}$. Therefore, $\bdelta$ is guaranteed
to be a coordinate-wise optimum. 

Our proof as it is shown here works for an algorithm that always moves every coordinate
and does not leave any out. However, in Algorithm \ref{Alg:NaiveCoordinate}, we included
an active set $\mathcal{A}$. It can easily be seen that this is not a problem.
Our proof works as it is for all coefficients that are included in $\mathcal{A}$. 
Furthermore, $\mathcal{A}$ always adds coefficients, for which it is not
optimal to remain at $0$ but never excludes any. Therefore, 
after a finite number of steps in the outer iteration, all variables
that are non-zero at their optimum are included. This concludes our theorem. 
\end{proof}

\section{Proof of Theorem \ref{Thm:CompleteCoordinate}}
Before we go to the main proof, we have to show a few lemmata:
\begin{lem}
Let $\bbeta$ have associated partition $\mfP$ and transformed problem
$\tilde{g}$, for which $\tilde{\bbeta}$ is coordinate-wise optimal. Also,
let $\mfF$ be the corresponding fused sets. Assume that $\beta_k = 0$ for all
$k \in P_i$ for some $i$ and $P_i = F_i$, i.e., the set $P_i$ cannot be split (according
to the ``Split inactive set'' rule). Then there exist
$s_k \in [-1,1]$ and $t_{kl} \in [-1,1]$ for $k,l \in P_i$ for which
\begin{equation}
\partial h / \partial \beta_k  + \lambda_1 w_k s_k  + \lambda_2 \sum_{l \in E_i} w_{kl} t_{kl} = 0 
    \quad \forall k \in P_i
\label{Eq:FLSABeta0}
\end{equation}
where $E_i = \{(k,l) \in E: k,l \in P_i \}$.
\label{Lem:Beta0}
\end{lem}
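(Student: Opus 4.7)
The plan is to recast the subgradient system~(\ref{Eq:FLSABeta0}) as a feasible transshipment problem on the induced subgraph $\mathcal{G}|P_i$ and to read off feasibility from max-flow--min-cut applied to the two augmented graphs $\mathcal{G}_{i+}^M$ and $\mathcal{G}_{i-}^M$ whose failure to split $P_i$ is exactly the hypothesis $P_i = F_i$.

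First I would reformulate. Set $\psi_k := \lambda_1 w_k s_k$ and $\phi_{kl} := \lambda_2 w_{kl} t_{kl}$ (antisymmetric in $k,l$); then~(\ref{Eq:FLSABeta0}) is the node-balance condition $\psi_k + \sum_{l:(k,l)\in E_i} \phi_{kl} = -\partial h/\partial \beta_k$ of a transshipment problem on $\mathcal{G}|P_i$ with per-node external slack $|\psi_k| \le \lambda_1 w_k$ and per-edge capacity $|\phi_{kl}| \le \lambda_2 w_{kl}$. Summing the balance over an arbitrary $S \subseteq P_i$, the internal $\phi$'s with both endpoints in $S$ cancel by antisymmetry and the capacity bounds force
$$
\left|\sum_{k \in S}\frac{\partial h}{\partial \beta_k}\right| \;\le\; \lambda_1 \sum_{k \in S} w_k \;+\; \lambda_2 \sum_{(k,l) \in E,\, k \in S,\, l \in P_i \setminus S} w_{kl};
$$
call this inequality $(\ast_S)$. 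By Hoffman's circulation feasibility theorem (equivalently, by max-flow--min-cut on an augmented network whose super-source and super-sink absorb the positive and negative parts of $-\partial h/\partial \beta_k$), $(\ast_S)$ holding for every $S \subseteq P_i$ is also sufficient for admissible $\psi_k, \phi_{kl}$ to exist, and rescaling by the weights returns $s_k, t_{kl} \in [-1,1]$ satisfying~(\ref{Eq:FLSABeta0}).

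Second I would derive $(\ast_S)$ for every $S \subseteq P_i$ from the hypothesis. Under the ``Split inactive sets'' rule, $P_i = F_i$ means $F_{i+} = \emptyset = F_{i-}$. The equality $F_{i+} = \emptyset$ says no node of $P_i$ is connected to $s$ in the residual graph of $\mathcal{G}_{i+}^M$, which by max-flow--min-cut is equivalent to the statement that separating $\{s\}$ alone is already a minimum $s$--$r$ cut of value $\sum_k c_{sk}$. Comparing this minimum cut to the alternative cut $(\{s\}\cup S,\, \{r\}\cup (P_i\setminus S))$ and using the identity $c_{sk} - c_{kr} = -(\partial h/\partial \beta_k + \lambda_1 w_k)$ (valid because exactly one of the two capacities is nonzero at each $k$), the inequality (min cut) $\le$ (alt cut) rearranges after cancellation to $-\sum_{k \in S}(\partial h/\partial \beta_k + \lambda_1 w_k) \le \lambda_2 \sum_{(k,l)\in E,\, k\in S,\, l\in P_i\setminus S} w_{kl}$, which is one of the two halves of $(\ast_S)$. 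The symmetric max-flow--min-cut argument on $\mathcal{G}_{i-}^M$ using $F_{i-} = \emptyset$ supplies the other half, and combining the two halves yields $(\ast_S)$.

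The main obstacle will be the Hoffman-feasibility step: unlike a classical max-flow problem there is no distinguished source or sink, every node carries two-sided external slack of size $\lambda_1 w_k$, and the node demands $-\partial h/\partial \beta_k$ are signed, so turning the necessary condition $(\ast_S)$ into a sufficient one requires either citing a circulation feasibility theorem or explicitly building an auxiliary max-flow network whose saturation is equivalent to feasibility. As a sanity check, $(\ast_S)$ at $S = P_i$ has empty boundary sum and reduces to $\left|\sum_{k \in P_i} \partial h/\partial \beta_k\right| \le \lambda_1 \sum_{k \in P_i} w_k$, which is exactly the coordinate-wise optimality of $\tilde{\bbeta}$ at $\tilde\beta_i = 0$, consistent with (and actually implied by) the hypothesis of the lemma.
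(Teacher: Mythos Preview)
Your proposal is correct, and it takes a genuinely different route from the paper.

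The paper argues by contradiction: assuming no admissible $s_k,t_{kl}$ exist, it introduces an auxiliary weighted FLSA problem
\[
\min_{\bs}\;\sum_{k\in P_i}\frac{1}{2\lambda_1 w_k}\bigl(\partial h/\partial\beta_k+\lambda_1 w_k s_k\bigr)^2
+\lambda_2\sum_{(k,l)\in E_i,\,k<l} w_{kl}|s_k-s_l|,
\]
observes that its optimum $s^0$ must have some $s_k^0>1$ (or $<-1$), and then reads off from the subgradient equations of this auxiliary problem that the set $S=\{k:s_k^0>1\}$ satisfies $-\sum_{k\in S}(\partial h/\partial\beta_k+\lambda_1 w_k)>\lambda_2\sum_{k\in S,\,l\in P_i\setminus S}w_{kl}$, which forces $F_{i+}\neq\emptyset$ via max-flow--min-cut. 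In effect the paper manufactures one violating cut by optimization rather than checking all cuts.

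You instead cast~(\ref{Eq:FLSABeta0}) as a transshipment/circulation feasibility question, identify the family of cut inequalities $(\ast_S)$ as the Gale--Hoffman necessary-and-sufficient conditions, and then verify every $(\ast_S)$ directly by comparing the trivial cut $\{s\}$ (respectively $\{r\}$) to the cut $\{s\}\cup S$ (respectively $\{r\}\cup S$) in $\mathcal{G}_{i+}^M$ (respectively $\mathcal{G}_{i-}^M$). Your use of $c_{sk}-c_{kr}=-(\partial h/\partial\beta_k\pm\lambda_1 w_k)$ to collapse the two node types is exactly right, and your sanity check at $S=P_i$ correctly recovers the coordinate-wise optimality of $\tilde\beta_i=0$.

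The trade-off is the one you flagged yourself: your argument is more transparent and symmetric in the two halves of $(\ast_S)$, but it leans on an external circulation-feasibility theorem, whereas the paper's auxiliary-FLSA trick is self-contained and produces the violating set constructively without ever naming Hoffman. Both proofs ultimately rest on the same max-flow--min-cut content; the paper hides the ``sufficiency'' direction of Hoffman inside the optimality conditions of its auxiliary quadratic problem.
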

\begin{proof}
In order to show this, 
assume that such $s_k$ and $t_{kl}$ do not exist. Now look at
the Fused Lasso problem
\[
\sum_{k \in P_i} \frac{1}{2\lambda_1 w_k}
(\partial h / \partial \beta_k  + \lambda_1 w_k s_k)^2 + 
\lambda_2 \sum_{(k,l) \in E_i, k < l} w_{kl}|s_k - s_l|.
\]
which has its optimum at, say, $s_k^0$ w.r.t. $\bs$. It has the subgradients w.r.t. $s_k$
\[
\partial h / \partial \beta_k  + \lambda_1 w_k s_k^0 + \lambda_2 \sum_{l \in E_i} w_{kl} t_{kl} = 0 
    \quad \forall k \in P_i
\]
and by optimality we know that the $t_{kl} \in [-1,1]$ with $t_{kl} = \text{sign}
(s_k -s_l)$ for $s_k \neq s_l$.   
By our assumption it follows that there exists a $k \in P_i$ with $s_k^0 \not \in [-1,1]$ and w.l.o.g. we assume that $\exists s_k^0 > 1$.
Let
\[
S = \{k | s_k^0 > 1 \} 
\]
Then we know that $t_{kl} = 1$ for all $k \in S$ and $l \not \in S$. Therefore,
we have 
\begin{align*}
0 =& \sum_{k \in S} \left(\partial h / \partial \beta_k  + \lambda_1 w_k s_k^0 +
 \sum_{l:l\in P_i, (k,l) \in E} \lambda_2 w_{kl} t_{kl}\right) > \\
& \sum_{k \in S} \left(\partial h / \partial \beta_k  + \lambda_1 w_k +
 \sum_{l:l\in P_i \backslash S, (k,l) \in E} \lambda_2 w_{kl} t_{kl}\right) 
\end{align*}
where in the inequality we use $s_k^0 > 1$ for $k \in S$ as well as $t_{kl} = -t_{lk}$.
Then
\[
-\sum_{k \in S} \partial h / \partial \beta_k  - \lambda_1 w_k >
 \sum_{k \in S}\sum_{l:l\in P_i \backslash S, (k,l) \in E} \lambda_2 w_{kl}  
\]
where we used that $t_{kl} = 1$ for all $k \in S, l \not \in S$.
In this equation, on the left hand side we have the sum over all capacities
coming out of the source into $S$ in $\mathcal{G}_{i+}^M$. On the right hand
side, we have the sum over all edges from $S$ into $P_i \backslash S$, which by the 
max-flow-min-cut theorem in graph theory is the maximal flow possible in
the graph coming out of nodes $S$. This implies, that the source is connected
to at least one node in $S$ in $\mathcal{G}_{i+}^R$, implying that $F_i \neq P_i$.
However this is a contradiction and therefore the claim holds.
\end{proof}

We also need the following lemma that characterizes $\bbeta$ where the 
associated partition $\mfP$ can be split and those where it cannot.
\begin{lem}
Assume that $\bbeta^{(m)}$ has partition $\mfP$ and associated transformed $\tilde{\bbeta}^{(m)}$
and target function $\tilde{g}$. Also assume that $\tilde{\bbeta}^{(m)}$ is 
coordinate-wise optimal w.r.t $\tilde{g}$. Furthermore, let $\mfF$ be
the fused sets associated with $\mfP$ and $\bbeta^{(m)}$. Then, the following statements hold
\begin{enumerate}[(a)]
\item If $\mfF \neq \mfP$, then $\bbeta^{(m+1)} \neq \bbeta^{(m)}$. \label{Enum:CaseSplitSet}
\item $\bbeta^{(m)}$ is the global optimum iff $\mfP = \mfF$. \label{Enum:Optimum}
\end{enumerate}
\label{Lem:ConvergenceConditions}
\end{lem}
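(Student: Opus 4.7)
The plan is to prove (a) first and then derive (b) from (a) combined with the max-flow characterization of subgradient optimality, relying heavily on Lemma \ref{Lem:Beta0} and its non-zero analogue. Throughout, I write $\tilde{g}_{\mfF}$ for the transformed objective built from a partition $\mfF$ of $\{1,\ldots,p\}$, so that the claim in (a) concerns a coordinate-descent sweep on $\tilde{g}_{\mfF}$ starting from $\tilde{\bbeta}^{(m)}$.

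For part (a), suppose $\mfF \neq \mfP$, so some $P_i$ splits into $F_{i+}\cup F_{i-}\cup F_{i0}$ with, without loss of generality, $F_{i+}\neq\emptyset$. I would show that the fused coordinate representing $F_{i+}$ in $\tilde{g}_{\mfF}$ is not already at its minimum; then by Lemma \ref{Lem:MinimalImprovement} the first sweep strictly moves it, giving $\bbeta^{(m+1)}\neq\bbeta^{(m)}$. To do so, compute the one-sided directional derivative of $\tilde{g}_{\mfF}$ with respect to $\tilde{\beta}_{F_{i+}}$ in the upward direction, which equals
\[
\sum_{k\in F_{i+}}\!\bigl(\partial h/\partial\beta_k + \lambda_1 w_k\,\mathrm{sign}(\beta_k)\bigr)
+ \lambda_2\!\!\sum_{\substack{(k,l)\in E\\ k\in F_{i+},\, l\in P_i\setminus F_{i+}}}\!\! w_{kl}.
\]
The first sum equals $-\sum_{k\in F_{i+}} c_{sk}$ (source capacities into $F_{i+}$) and the second equals the capacity of the cut $(\{s\}\cup F_{i+},\,\ldots)$ in $\mathcal{G}_i^M$. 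Residual capacity from $s$ into $F_{i+}$ in $\mathcal{G}_i^R$ is precisely the statement that the max $s$-$r$ flow is strictly smaller than $\sum_{k\in F_{i+}} c_{sk}$, which by max-flow/min-cut is bounded by the cut value. Hence the directional derivative above is strictly negative, so coordinate descent moves $\tilde{\beta}_{F_{i+}}$ strictly upward.

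For part (b), the "$\Rightarrow$" direction is the contrapositive of (a): if $\mfP\neq\mfF$ then (a) combined with Lemma \ref{Lem:MinimalImprovement} yields $g(\bbeta^{(m+1)})<g(\bbeta^{(m)})$, so $\bbeta^{(m)}$ is not optimal. For "$\Leftarrow$", assume $\mfP=\mfF$ and use the subgradient optimality condition from Appendix A: we need $s_k,t_{kl}\in[-1,1]$ with $s_k=\mathrm{sign}(\beta_k)$ for $\beta_k\neq 0$ and $t_{kl}=\mathrm{sign}(\beta_k-\beta_l)$ when $\beta_k\neq\beta_l$, such that the per-coordinate optimality equations hold. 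Coordinate-wise optimality of $\tilde{\bbeta}^{(m)}$ on $\tilde{g}$ immediately delivers these multipliers across groups (since by definition of $\mfP$ values differ across groups, the $t_{kl}$ are pinned to $\pm 1$), and yields the aggregated identity $\sum_{k\in P_i}(\partial h/\partial\beta_k+\lambda_1 w_k s_k)=0$ for each $P_i$. What remains is to furnish the within-$P_i$ multipliers: for $P_i$ with $\beta_k=0$ this is exactly Lemma \ref{Lem:Beta0} applied via the non-splitting "Split inactive sets" rule; for $P_i$ with $\beta_k\neq 0$ I would state and prove an analogous lemma (with $s_k=\mathrm{sign}(\beta_k)$ fixed and only the $t_{kl}$'s free) from the non-splitting "Split active sets" rule, by the same max-flow/min-cut contradiction argument used in Lemma \ref{Lem:Beta0}.

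The main obstacle is the rigorous translation between residual capacities in $\mathcal{G}_i^R$ and the existence (or non-existence) of valid subgradient multipliers. In (a) this is a one-directional statement (residual capacity implies strict descent direction), while in (b) we need the converse (no residual capacity, i.e. $\mfF=\mfP$, implies multipliers exist); the latter is essentially the content of Lemma \ref{Lem:Beta0}, and I expect the active-set analogue to go through by replacing $\lambda_1 w_k s_k$ with $\lambda_1 w_k\,\mathrm{sign}(\beta_k)$ in the auxiliary minimization and re-running the identical min-cut contradiction. Beyond these reductions, everything else is bookkeeping.
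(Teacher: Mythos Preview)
Your proposal is correct and tracks the paper's proof closely: part (a) via a strict-descent directional derivative from residual source capacity and max-flow/min-cut, and part (b) via the contrapositive plus subgradient construction, invoking Lemma~\ref{Lem:Beta0} for the zero groups.

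The one substantive difference is how you handle the within-group multipliers $t_{kl}$ for the non-zero sets $P_i$ in part (b)$\Leftarrow$. You propose to re-run a Lemma~\ref{Lem:Beta0}-style contradiction argument (auxiliary minimization, min-cut contradiction) with $s_k=\mathrm{sign}(\beta_k)$ fixed. The paper instead argues constructively: since $P_i=F_i$ means no node is connected to $s$ or $r$ in the residual graph $\mathcal{G}_i^R$, every source and sink edge is saturated in the max-flow solution $f$; flow conservation at each $k\in P_i$ then reads exactly as the subgradient equation with $t_{kl}:=f_{kl}/(\lambda_2 w_{kl})\in[-1,1]$. Your route works, but the paper's is shorter and avoids setting up a second auxiliary problem---the flow you already computed to test the split \emph{is} the certificate. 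The contradiction machinery of Lemma~\ref{Lem:Beta0} is only needed in the zero case because there the $s_k$ are free and the single flow problem does not directly yield them.
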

\begin{proof}
Let us first deal with case (\ref{Enum:CaseSplitSet}). The condition implies, that 
at least one of the sets of $\mfP$ has been split. Let
$P_i$ be a set that has been split and $F_{i+} \neq \emptyset$ (otherwise, switch signs).
Then the condition for splitting a set implies 
that the coordinate-wise algorithm will move at least one of the coordinates. To see this, 
let $f_{kl}$ for $k,l \in P_i \cup \{r,s\}$ be the solution of the with $P_i$ and graph
$\mathcal{G}_{i+}^M$ associated maximum-flow problem. Then by definition of 
$F_{i+}$ we know that 
\begin{align*}
&\sum_{k \in F_{i+}} -\frac{\partial h}{\partial \beta_k} - \lambda_1 w_k = \sum_{k \in F_{i+}}
c_{sk}  > \\
& \sum_{k \in F_{i+}} f_{sk} = \sum_{(k,l): k \in F_{i+}, l\in P_i \backslash F_{i+},
    (k,l)\in E_{i+}^M} c_{kl}
\end{align*}
where the last inequality follows from the max-flow-min-cut theorem of graph theory \citep[Chapter 26]{Cormen2001}.
If $F_{i+}$ is not one connected group,
but consists of several disconnected components, this is true for every component.
Therefore, we can just assume that $F_{i+}$ consists of just one component.
By the definition of our constrained problem we know that
\[
g(\tilde{\bX}, \tilde{\by}, \tilde{\bbeta}^{(m)}, \tilde{\bw}, \tilde{\mathcal{G}}, \blambda) = 
g(\bX, \by, \bbeta^{(m)}, \bw, \mathcal{G}, \blambda)
\]
and 
\[
\frac{\partial_+ g(\tilde{\bX}, \tilde{\by}, \tilde{\bbeta}, \tilde{\bw}, 
        \tilde{\mathcal{G}}, \blambda)}{\partial \tilde{\beta}_i} = 
\sqrt{|F_i|} \frac{\partial_+ g(\bX, \by, \bbeta, \bw, \mathcal{G}, \blambda)}{\partial \bv}.
\]
where we take the directional derivative along the vector $\bv$ with 
$v_k = 1/\sqrt{|F_i|}$ for $k \in F_i$ and $v_k = 0$ for $k \not \in F_i$.
As $h$ by the definition of $P_i$ is differentiable at $\bbeta^{(m)}$ w.r.t. $\bv$, we get
\begin{align*}
\sqrt{|F_{i+}|} &\frac{\partial_+ g(\bX, \by, \bbeta^{(m)}, \bw, \mathcal{G}, \blambda)}{\partial \bv}=\\
&\sum_{k \in F_{i+}} \frac{\partial h(\bX, \by, \bbeta^{(m)}, \bw, \mathcal{G}, \blambda)}{\partial \beta_k} + 
\lambda_1 w_k + \lambda_2 \sum_{k \in F_{i+}, l \in P_i \backslash F_{i+}; (k,l) \in E} w_{kl} =\\ 
&\sum_{k \in F_{i+}} c_{sk} + 
\sum_{k \in F_{i+}, l \in P_i \backslash F_{i+}; (k,l) \in E} f_{kl} < 0 
\end{align*}
as deduced from the inequality above. Therefore, after applying the coordinate-wise optimization,
we have at least moved the component of  $\tilde{\beta}$ corresponding to $F_{i+}$ and get $\bbeta^{(m+1)} \neq \bbeta^{(m)}$.

Now we show (\ref{Enum:Optimum}). 
First, we assume that $\bbeta^{(m)}$ is indeed optimal. From the fact that the coordinate-wise
algorithm always improves $\bbeta^{(m)}$ w.r.t. $g$, part (\ref{Enum:CaseSplitSet}) proves
that
\[
\mfF \neq \mfP \quad \Longrightarrow \quad \bbeta^{(m)} \text{ not optimal}.
\]
Negating this just proves the first part of the claim.

For the opposite direction, assume that $\mfF = \mfP$.
In order to show optimality of a non-differentiable function, we can use subgradients again.
For our case, it says that a solution $\bbeta$ is optimal, if there are $s_k$ 
for $k \in 1, \ldots, p$ and $t_{kl}$ with $(k,l) \in E$ such that
\begin{equation}
-(\bX^T(\by - \bX \bbeta))_k + \lambda_1 w_k s_k + \lambda_2 \sum_{l : (k,l) \in E} w_{kl} t_{kl} = 0
\label{Eq:Subgradient}
\end{equation}
for all $k$. Here, it has to hold that $s_k = \text{sign}(\beta_k)$ if $\beta_k \neq 0$
and $s_k \in [-1,1]$ for $\beta_k = 0$. Similarly, $t_{kl} = \text{sign}(\beta_k-\beta_l)$
for $\beta_k \neq \beta_l$ and $t_{kl} \in [-1,1]$ for $\beta_k = \beta_l$.

As specified before in Section \ref{Sec:DefineFusedSets}, the sums over the
$t_{kl}$ can be divided into the $t_{kl}$ within sets $P_i$ and those between. Then
Equation (\ref{Eq:Subgradient}) is
\begin{equation}
\partial h / \partial \beta_k  + \lambda_1 w_k s_k + \lambda_2 \sum_{l\in P_i : (k,l) \in E} w_{kl} t_{kl} = 0.
\label{Eq:SubgradientWithGamma}
\end{equation}

First consider the case where $\beta_k \neq 0$ for $k \in P_i$. Here we have then that
$s_k = \text{sign}(\beta_k)$.
Then let $f_{kl}$ for $k,l \in P_i$ be the solution of graph $\mathcal{G}^M_{i}$.
From the assumption that $P_i=F_i$, we know that in the residual graph
$\mathcal{G}^R_{i}$, no node is connected to the source or the sink, all edges coming
from the source or going into the sink are at maximum capacity. 
Together, this gives for all $k \in P_i$ either
\[
c_{sk} = f_{sk} = \sum_{l:l\in P_i, (k,l)\in E} f_{kl}
\]
or
\[
c_{kr} = f_{kr} = \sum_{l:l\in P_i, (l,k)\in E} f_{lk}
\]
depending on if $k$ is connected to the source $s$ or the sink $r$ in $\mathcal{G}^M_{i}$
and by definition of $c_{sk}$ and $c_{kr}$ then
\[
0 = \frac{\partial h}{\partial \beta_k} + \lambda_1 w_k \text{sign}(\beta_k^{(m)}) + \sum_{l:l\in P_i, (k,l)\in E} f_{kl}
\]
for all $k \in P_i$ (noting that $f_{kl}=-f_{lk}$). 
By definition of $c_{kl}$ we know that $-\lambda_2 w_{kl} \le f_{kl} \le \lambda_2 w_{kl}$
and therefore
\[
-1 \le t_{kl} := \frac{f_{kl}}{\lambda_2 w_{kl}} \le 1
\]
for all $k,l \in P_i; (k,l) \in E$. Therefore, with this definition of $t_{kl}$, 
Equation (\ref{Eq:SubgradientWithGamma}) holds, which implies that the subgradient equations
hold.

Now it remains to show the same for $\beta_k = 0$ for $k \in P_i$ for some $i$.
However, we have already shown in Lemma \ref{Lem:Beta0} that in this case
Equation (\ref{Eq:SubgradientWithGamma}) holds, which in turn implies that the 
subgradient equations hold.

Putting all these cases together, we have shown that the subgradient equations hold
for all $k \in P_i$ and for all $1 \le i \le |\mfP|$ and therefore, the 
optimality of $\bbeta$ is shown.
\end{proof}

Now that we have established conditions for optimality of the results, we only need
two more, very brief lemmata.

\begin{lem}
Assume that we only allow fusion of sets, but not splits. Then the algorithm
converges after a finite number of steps.
\label{Lem:FiniteSteps}
\end{lem}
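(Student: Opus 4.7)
The plan is to reduce convergence to a simple monotonicity argument on partitions of $\{1,\ldots,p\}$. The key invariant I would establish is that, under the ``fusion only'' restriction, the sequence of fused-set partitions $\mfF^{(m)}$ produced by successive outer iterations of Algorithm~\ref{Alg:CompleteCoordinate} is a monotone chain with respect to coarsening, and then conclude by finiteness of such chains.

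First, I would argue that each inner application of the naive coordinate-wise algorithm to the transformed problem $\tilde{g}$ cannot break any current fused set. Indeed, by the construction of $\tilde{\bX},\tilde{\by},\tilde{\bw},\tilde{\mathcal{G}}$ in Section~\ref{Sec:FusedSets}, all indices $k\in F_i^{(m)}$ are collapsed into the single variable $\tilde{\beta}_i$, so every coefficient in $F_i^{(m)}$ moves in lockstep and the identities $\beta_k=\beta_l$ for $k,l\in F_i^{(m)}$ are preserved throughout the inner loop.

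Second, I would compare $\mfF^{(m+1)}$ with $\mfF^{(m)}$. After the inner loop terminates, the resulting $\bbeta^{(m+1)}$ defines a new partition $\mfP^{(m+1)}$ via the equivalence relation $\sim$ of Section~\ref{Sec:DefineFusedSets}. The observation above guarantees that each block $F_i^{(m)}$ is a connected set of equal-valued coefficients, hence is contained in some block of $\mfP^{(m+1)}$. Because we have disallowed splits, the ``Fuse sets'' rule sets $\mfF^{(m+1)}=\mfP^{(m+1)}$. Therefore $\mfF^{(m+1)}$ is coarser than (or equal to) $\mfF^{(m)}$, i.e.\ the number of blocks $|\mfF^{(m)}|$ is non-increasing in $m$.

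Finally, I would close with the chain argument. At each outer iteration either (i)~$\mfF^{(m+1)}$ strictly coarsens $\mfF^{(m)}$, in which case $|\mfF^{(m+1)}|<|\mfF^{(m)}|$, or (ii)~$\mfF^{(m+1)}=\mfF^{(m)}$, in which case the stopping criterion of Algorithm~\ref{Alg:CompleteCoordinate} is met and the outer loop halts. Since $|\mfF^{(m)}|$ is an integer in $\{1,\ldots,p\}$, option~(i) can occur at most $p-1$ times before option~(ii) is forced, so the outer loop terminates in at most $p-1$ iterations. The main technical point, and the only thing that really needs careful checking, is the stability invariant in the first step, ensuring that the inner coordinate-wise pass on $\tilde{g}$ never separates a fused block; once that is in place, the monotone-chain conclusion is immediate.
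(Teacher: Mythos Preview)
Your proposal is correct and follows essentially the same approach as the paper: both argue that under the fusion-only restriction the partition $\mfF^{(m)}$ can only (weakly) coarsen from one outer iteration to the next, so the integer $|\mfF^{(m)}|$ strictly decreases until the stopping criterion is hit. You are slightly more explicit than the paper in spelling out the ``stability invariant'' (that the inner loop on $\tilde g$ cannot break a fused block because those indices have been collapsed into a single coordinate), which is exactly the implicit reason behind the paper's one-line claim that $|\mfP^{(m+1)}|<|\mfP^{(m)}|$ whenever the partition changes.
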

\begin{proof}
Assume that we are currently in step $m$ with estimate $\bbeta^{(m)}$ and associated
partition $\mfP^{(m)}=\mfF^{(m)}$ (as we only allow fusions, not splits, the equality has
to hold). Then
after the application of the coordinate-wise algorithm we get the next estimate 
$\bbeta^{(m+1)}$  with partition $\mfP^{(m+1)}$. If $\mfP^{(m+1)} = \mfP^{(m)}$,
then we cannot move any further without splits after a total of $m+1$ steps. 
However if $\mfP^{(m+1)} \neq \mfP^{(m)}$, the
fact that we can only fuse coefficients implies that $|\mfP^{(m+1)}|<|\mfP^{(m)}|$. 
Therefore, at most $|\mfP^{(m+1)}|$ further steps are possible. Thus, the algorithm
converges after a finite number of steps. 
\end{proof}

\begin{lem}
Let $\bbeta$ have partition $\mfP$ with associated transformed coefficients 
$\tilde{\bbeta}$ and transformed problem $\tilde{g}$. Then there are only
finitely many points $\bbeta$ such that $\tilde{\bbeta}$ is coordinate-wise
optimal w.r.t. $\tilde{g}$. 
\label{Lem:FiniteOptimalPoints}
\end{lem}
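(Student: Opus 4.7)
The plan is to reduce the count to finitely many discrete structures and bound each contribution separately. First, since the partition $\mfP$ is determined by $\bbeta$ via the equivalence relation of Section \ref{Sec:DefineFusedSets} and only finitely many partitions of $\{1,\ldots,p\}$ exist, it suffices to fix an arbitrary partition $\mfP$ and bound the number of $\bbeta$ that induce exactly this partition and for which $\tilde{\bbeta}$ is coordinate-wise optimal for $\tilde{g}$.

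With $\mfP$ fixed, I would exploit that by maximality of the equivalence classes any two distinct blocks joined by an edge of $\mathcal{G}$ must carry different values: so for every edge $(i,j)\in\tilde{E}$ the sign $\sign(\tilde{\beta}_i-\tilde{\beta}_j)$ is forced to lie in $\{-1,+1\}$, and for each coordinate $\sign(\tilde{\beta}_i)$ lies in $\{-1,0,+1\}$. Call a choice of (i) the zero set $Z\subseteq\{1,\ldots,|\mfP|\}$, (ii) signs $s_i\in\{-1,+1\}$ on $Z^{c}$, and (iii) signs $t_{ij}\in\{-1,+1\}$ on the edges of $\tilde{E}$, a \emph{signature}; only finitely many signatures exist. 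For each signature, coordinate-wise optimality forces $\tilde{\beta}_i=0$ for $i\in Z$ and, for $i\notin Z$, the linear subgradient equation
\[
-\tilde{\bx}_i^{T}(\tilde{\by}-\tilde{\bX}\tilde{\bbeta})+\lambda_{1}\tilde{w}_{i}s_{i}+\lambda_{2}\sum_{j:(i,j)\in\tilde{E}}\tilde{w}_{ij}t_{ij}=0.
\]
Collected across $i\notin Z$, these form an affine linear system in $\tilde{\bbeta}_{Z^{c}}$ whose right-hand side is constant on the signature.

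The main obstacle is to show that each signature contributes at most one admissible $\tilde{\bbeta}$. My approach is as follows: any admissible solution lies in the open interior of its signature cell (all inequalities are strict), where $\tilde{g}$ is smooth, so coordinate-wise optimality there is equivalent to $\nabla_{Z^{c}}\tilde{g}=0$. By convexity of $\tilde{g}$ this identifies each admissible $\tilde{\bbeta}$ as a global minimizer of $\tilde{g}$ on the affine slice $\{\tilde{\beta}_{i}=0 : i\in Z\}$. If two distinct admissible optima existed in the same signature, the whole segment between them would also consist of global minimizers lying inside the (convex, open) cell, and along this segment $\tilde{\bX}\tilde{\bbeta}$ would have to be constant; in other words the segment would lie in the null space of $\tilde{\bX}_{Z^{c}}$. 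Generically (full column rank of $\tilde{\bX}_{Z^{c}}$ on each signature) this is impossible; otherwise one invokes a mild non-degeneracy on $\bX$ or quotients out the null directions to select a canonical representative per signature. Summing the finite number of partitions with the finite number of signatures per partition then yields the claim.
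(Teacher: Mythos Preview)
Your approach is in the right direction but carries unnecessary machinery and leaves a genuine gap at the end.

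\textbf{Overcomplication.} The paper's proof tracks only two discrete pieces of data: the partition $\mfP$ and the zero set $N=\{i:\tilde{\beta}_i=0\}$. Your additional bookkeeping of signs $s_i$ on $Z^c$ and edge signs $t_{ij}$ is not needed. The reason is the one you yourself note: by maximality of the blocks of $\mfP$, adjacent blocks carry distinct values, so $\tilde{g}$ is actually \emph{differentiable} in every coordinate $i\in N^c$. Hence coordinate-wise optimality on $N^c$ is simply $\partial\tilde{g}/\partial\tilde{\beta}_i=0$ for all $i\in N^c$, which by convexity of $\tilde{g}$ (restricted to the affine slice $\{\tilde{\beta}_i=0:i\in N\}$) already identifies $\tilde{\bbeta}$ as a global minimizer on that slice. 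No enumeration of sign patterns is required; the whole signature layer can be dropped.

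\textbf{The real gap.} Your final paragraph is where the argument is incomplete. Saying ``generically (full column rank of $\tilde{\bX}_{Z^c}$) this is impossible; otherwise one invokes a mild non-degeneracy on $\bX$ or quotients out the null directions'' is not a proof. In particular, ``quotienting out null directions'' does not rescue the lemma as stated: if $\tilde{\bX}_{Z^c}$ has a nontrivial null space, then there is genuinely a continuum of coordinate-wise optima and the finiteness claim fails. The paper handles this step by asserting that $\tilde{g}$ is strictly convex in the variables $N^c$, which gives uniqueness of the minimizer on the slice directly. You should either adopt that strict-convexity assumption explicitly (it amounts to full column rank of $\tilde{\bX}_{N^c}$, which is implicit throughout the paper) or acknowledge that the lemma needs it; the hedged alternatives you list do not close the argument.
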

\begin{proof}
First of all, we note that by the definition of $\mfP$, it holds that
$\tilde{g}$ is differentiable w.r.t. $\tilde{\beta}_i$ for all $\tilde{\beta}_i \neq 0$
and as they are coordinate-wise optimal, $\partial \tilde{g} / \partial \tilde{\beta}_i = 0$. Let
\[
N = \{i : \tilde{\beta}_i = 0 \}.
\]
As the function $\tilde{g}$ is strictly convex in the variables 
$N^c = \{1, \ldots, |\mfP|\} \backslash N$ at point $\tilde{\bbeta}$ and thus,
fixing all variables in $N$ at 0, $\tilde{\bbeta}$ is the global solution for $N^c$ and
is unique. Therefore, for any fixed partition $\mfP$ and set $N$, there is at most
one $\tilde{\bbeta}$ that is coordinate-wise optimal. As there are only finitely
many $\mfP$ and $N$, the claim holds.
\end{proof}

After all these lemmata, we now show Theorem \ref{Thm:CompleteCoordinate}.
\begin{proof}
First, we note that Lemma \ref{Lem:FiniteSteps} guarantees that in a finite number of 
steps, we will be able to get to a coordinate-wise optimal point that cannot be moved
anymore by only fusing sets. Furthermore, Lemma \ref{Lem:FiniteOptimalPoints} guarantees
that there are only finitely many such points. In Lemma \ref{Lem:ConvergenceConditions},
part (\ref{Enum:CaseSplitSet}) shows that if we are at such a point and we can split
a set here, then we will move away (and never return as the coordinate-wise 
procedure always improves the loss function).
On the other hand, part (\ref{Enum:Optimum}) shows that if we cannot split a set, then
we are at the global optimum.

Overall, we see at worst, in a finite number of steps we can visit each such coordinate-wise
optimal and stable point, but will only visit each one at most once and only stop at the global
optimum, which is trivially one of these coordinate-wise optimal and stable points. 

Therefore, the algorithm converges to the global optimum in a finite number of steps.
\end{proof}

\bibliographystyle{plainnat}
\bibliography{FusedLasso}

\end{document}